\documentclass[11pt,letterpaper,english,preprintnumbers,amsmath,amssymb,superscriptaddress,nofootinbib,prl]{revtex4}

\usepackage{lipsum}
\usepackage{mathpazo}
\usepackage[T1]{fontenc}
\usepackage[latin9]{inputenc}
\setcounter{secnumdepth}{3}
\usepackage{color}
\usepackage{verbatim}
\usepackage{amsmath}
\usepackage{amsthm}
\usepackage{amssymb}
\usepackage{graphicx}
\usepackage{setspace}

\makeatletter

\pdfpageheight\paperheight
\pdfpagewidth\paperwidth

\providecommand{\tabularnewline}{\\}

\@ifundefined{textcolor}{}
{%
 \definecolor{BLACK}{gray}{0}
 \definecolor{WHITE}{gray}{1}
 \definecolor{RED}{rgb}{1,0,0}
 \definecolor{GREEN}{rgb}{0,1,0}
 \definecolor{BLUE}{rgb}{0,0,1}
 \definecolor{CYAN}{cmyk}{1,0,0,0}
 \definecolor{MAGENTA}{cmyk}{0,1,0,0}
 \definecolor{YELLOW}{cmyk}{0,0,1,0}
}
\theoremstyle{remark}
\ifx\thechapter\undefined
  \newtheorem{rem}{\protect\remarkname}
\else
  \newtheorem{rem}{\protect\remarkname}[chapter]
\fi
\theoremstyle{plain}
\newtheorem{lyxalgorithm}{\protect\algorithmname}
\theoremstyle{definition}
\ifx\thechapter\undefined
  \newtheorem{defn}{\protect\definitionname}
\else
  \newtheorem{defn}{\protect\definitionname}[chapter]
\fi
\theoremstyle{plain}
\ifx\thechapter\undefined
  \newtheorem{lem}{\protect\Lemmaname}
\else
  \newtheorem{lem}{\protect\Lemmaname}[chapter]
\fi
\theoremstyle{plain}
\ifx\thechapter\undefined
  \newtheorem{cor}{\protect\corollaryname}
\else
  \newtheorem{cor}{\protect\corollaryname}[chapter]
\fi
\theoremstyle{plain}
\ifx\thechapter\undefined
  \newtheorem{thm}{\protect\theoremname}
\else
  \newtheorem{thm}{\protect\theoremname}[chapter]
\fi
\theoremstyle{plain}
\ifx\thechapter\undefined
  \newtheorem{conjecture}{\protect\conjecturename}
\else
  \newtheorem{conjecture}{\protect\conjecturename}[chapter]
\fi


\makeatother

\usepackage{babel}
\providecommand{\algorithmname}{Algorithm}
\providecommand{\conjecturename}{Conjecture}
\providecommand{\corollaryname}{Corollary}
\providecommand{\definitionname}{Definition}
\providecommand{\Lemmaname}{Lemma}
\providecommand{\remarkname}{Remark}
\providecommand{\theoremname}{Theorem}

\begin{document}

\title{Efficient unitary paths and quantum computational supremacy: \\
A  proof of average-case hardness of Random Circuit Sampling}
\author{Ramis Movassagh}
\email{q.eigenman@gmail.com}
\selectlanguage{english}%
\affiliation{IBM Research, MIT-IBM AI lab, Cambridge MA, 02142}
\maketitle
\begin{singlespace}
\noindent 
One-parameter interpolations between any two
unitary matrices (e.g., quantum gates) $U_1$ and $U_2$ along efficient paths contained in the
unitary group are constructed. Motivated by applications, we propose the continuous unitary
path $U(\theta)$ obtained from the QR-factorization
\[
U(\theta)R(\theta)=(1-\theta)A+\theta B,
\]
where $U_1 R_1=A$
and $U_2 R_2=B$ are the QR-factorizations of $A$ and $B$, and $U(\theta)$ is a unitary for all $\theta$ with $U(0)=U_1$ and $U(1)=U_2$. The
 QR-algorithm is modified to, instead of $U(\theta)$, output
a matrix whose columns are proportional to the corresponding columns
of $U(\theta)$ and whose entries are polynomial or rational functions of $\theta$.
By an extension of the Berlekamp-Welch algorithm we show that rational
functions can be efficiently and exactly interpolated with respect
to $\theta$. We then construct probability distributions over unitaries that are arbitrarily close to the Haar measure.

Demonstration of computational advantages of NISQ \cite{preskill2018quantum} over classical computers is an imperative near-term goal, especially with the exuberant experimental frontier in academia and industry (e.g., IBM and Google). A candidate for quantum computational supremacy is Random Circuit Sampling (RCS), which is the task of sampling from the output distribution of a random circuit. The aforementioned mathematical results provide a new way of scrambling quantum circuits and are applied to prove that exact RCS is $\#P$-Hard on average, which is a simpler alternative to Bouland {\it et al}'s \cite{bouland2018quantum}. (Dis)Proving the quantum supremacy conjecture requires \textit{approximate} average case hardness; this remains an open problem for all quantum supremacy proposals. 
\tableofcontents{}
\section{\label{sec:Unitary-paths-and}Unitary paths and summary of results}
\subsection{Paths on the unitary group}
Let $\mathbb{U}(N)$ denote the unitary group and suppose $U_{1}\in\mathbb{U}(N)$
and $U_{2}\in\mathbb{U}(N)$ are two unitary matrices. How can one
build a parametrized path $U(\theta)$ between them such that $U(\theta)\in\mathbb{U}(N)$
for all $\theta\in[0,1]$ and $U(0)=U_{1}$ and $U(1)=U_{2}$? In
this subsection we discuss a few different ways of constructing such
a $U(\theta)$.

Let the Hermitian matrix $H$ be $H=-i\log(U_{1}^{\dagger}U_{2})$,
then the most natural path is \textit{the geodesic}:
\begin{equation}
U(\theta)=U_{1}\exp(iH\theta),\label{eq:Geodesic}
\end{equation}
which connects $U_{1}=U(0)$ to $U_{2}=U_{1}e^{iH}=U(1)$. 

An alternative interpolation between the two unitaries is
\[
U(\theta)=U_{1}^{(1-\theta)}U_{2}^{\theta},\qquad\theta\in[0,1]
\]
where the power is defined, as for any normal matrix, to act only
on the eigenvalues. 

Motivated by the applications we have in mind we now propose a completely
different interpolation scheme that is inspired by numerical linear
algebra. In general any $N\times N$ matrix $M$ has a QR-decomposition
$UR=M$, where $U$ is a unitary (or orthogonal if $M$ is real) and
$R$ is an upper triangular matrix \cite{strang1993introduction}.
Taking the convention that the diagonal entries of $R$ are positive,
this decomposition is unique so long that the matrix is full rank.
Below we refer to both orthogonal and unitary simply as unitary.

Suppose one wants to interpolate between the unitaries in the QR decompositions
of the matrices $A=U_{1}R_{1}$ and $B=U_{2}R_{2}$ on a continuous
unitary path. Then, $U(\theta)$ defined by the QR-decomposition of the interpolation
is such a paths:
\begin{equation}
U(\theta)R(\theta)=(1-\theta)\text{ }A+\theta\text{ }B.\label{eq:MatrixInterpolation}
\end{equation}
By construction $U(\theta)$ is unitary for all $\theta$ (See Fig.
\ref{fig:Unitary_Interpolation} (left)).

Alternatively, the Euclidean line interval that connects any two unitary
matrices $U_{1}$ and $U_{2}$ is the matrix pencil $(1-\theta)U_{1}+\theta U_{2}$
for $\theta\in[0,1]$. Although this sum is not generally a unitary
matrix for $\theta\in(0,1)$, the interpolation
\begin{figure}
\begin{centering}
\includegraphics[scale=0.35]{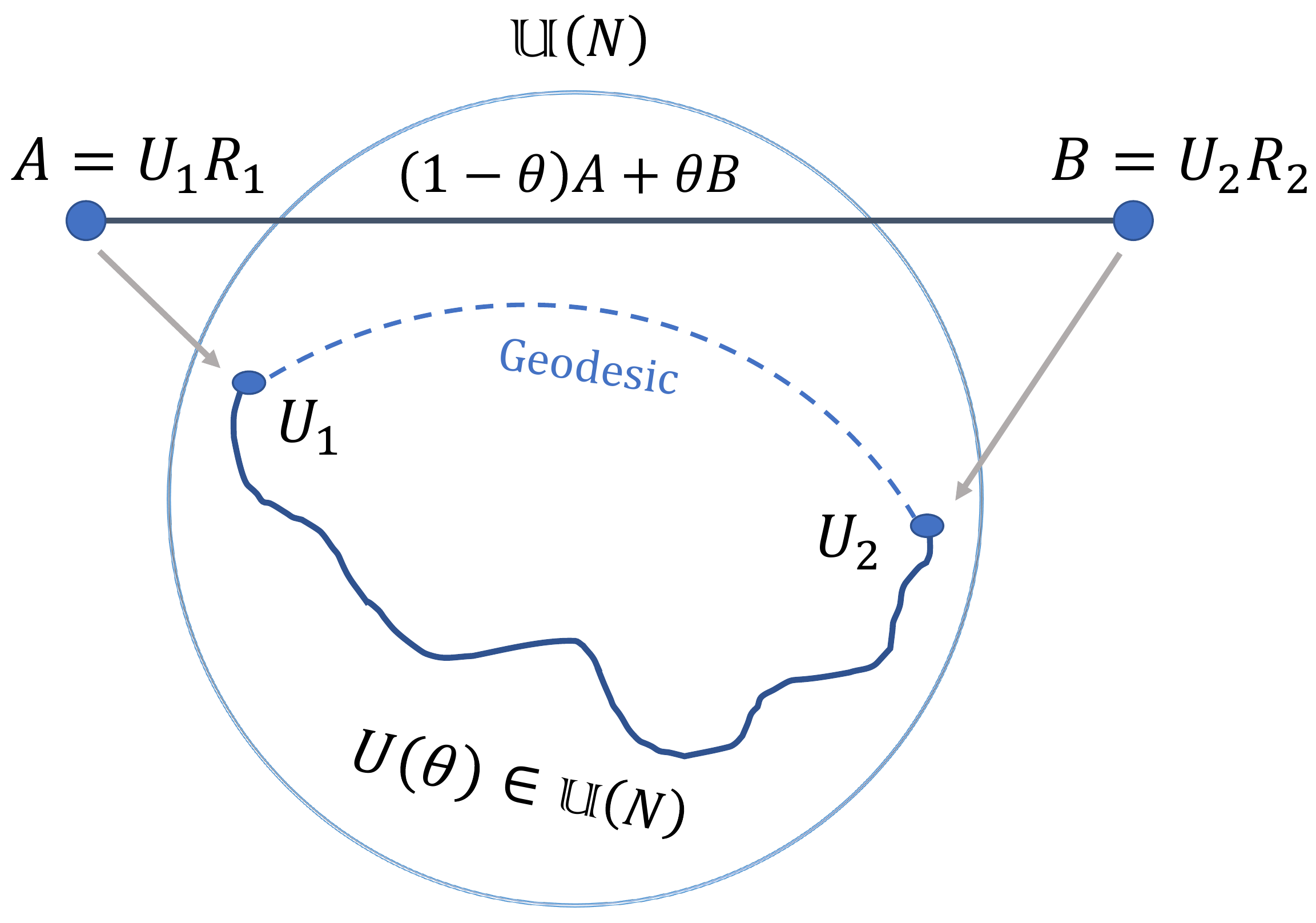}$\quad$\includegraphics[scale=0.35]{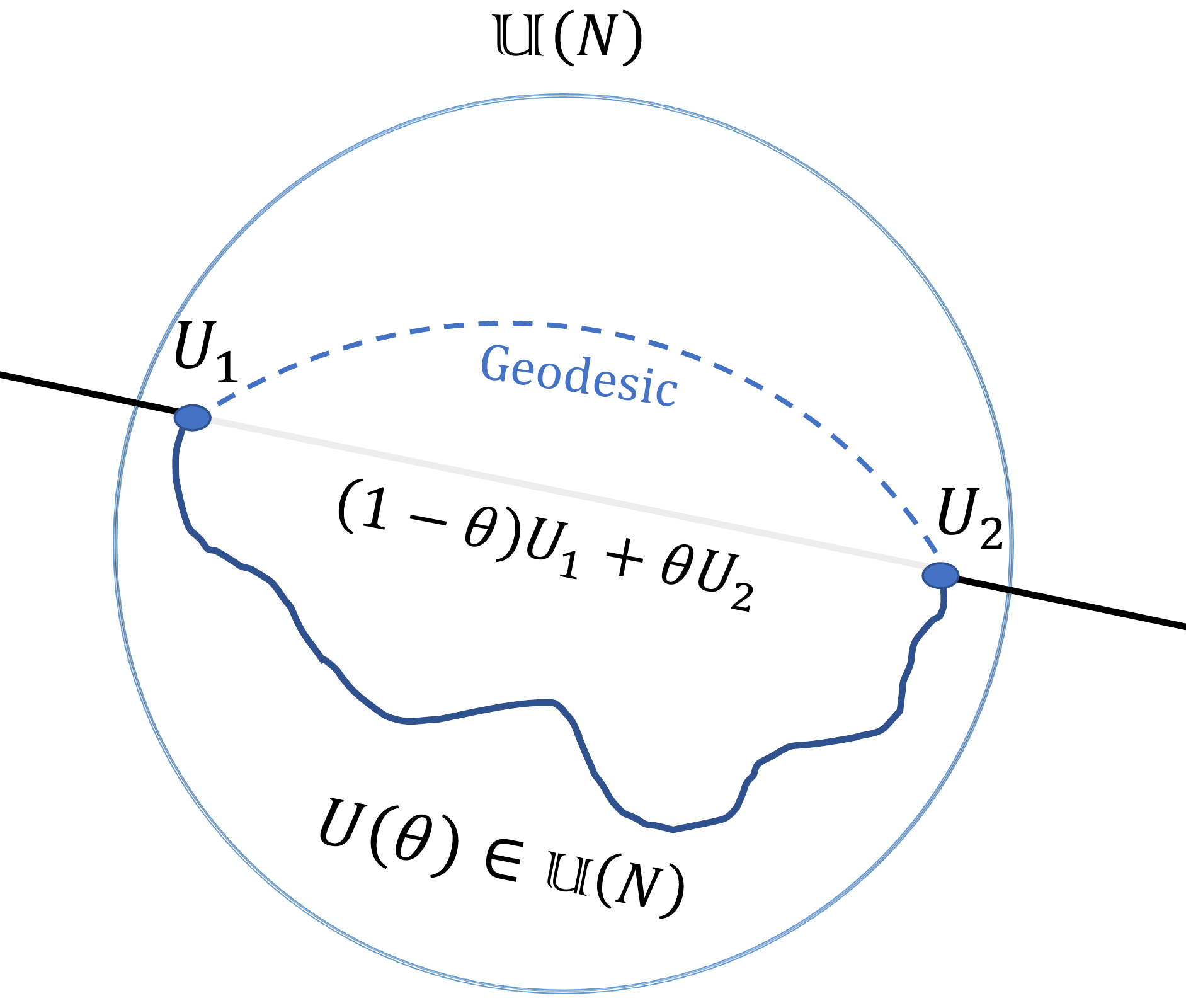}\caption{\label{fig:Unitary_Interpolation}Schematics of $U(\theta)\in\mathbb{U}(N)$ based on the interpolation given by: Left: Eq. \eqref{eq:MatrixInterpolation}
and Right: Eq. \eqref{eq:QR} }
\par\end{centering}
\end{figure}
\begin{equation}
U(\theta)R(\theta)=(1-\theta)\text{ }U_{1}+\theta\text{ }U_{2},\label{eq:QR}
\end{equation}
provides a unitary path $U(\theta)$ with the end points $U(0)=U_{1}$
and $U(1)=U_{2}$  (see Fig. \eqref{fig:Unitary_Interpolation} (right)).

Since for generic choice of matrices, the sums in Eqs. \eqref{eq:MatrixInterpolation} and \eqref{eq:QR}
are with probability one invertible, $U(\theta)$ is generically unique for all $\theta$.

Lastly, an interpolation is achieved by $U_{1}U(\theta)$ and the
QR-decomposition
\begin{equation}
U(\theta)R(\theta)=(1-\theta)\text{ }\mathbb{I}+\theta\text{ }U_{1}^{\dagger}U_{2}.\label{eq:QR_line}
\end{equation}
This decomposition is guaranteed to be unique for all $\theta\ne1/2$
and satisfies $U(0)=U_{1}$ and $U(1)=U_{2}$. Eq. \eqref{eq:QR_line}
``multiplicatively'' interpolates between the unitaries, which is
an alternative to the exponential form (Eq. \eqref{eq:Geodesic}).

Why do we bother with such proposals when we have Eq. \eqref{eq:Geodesic}?
In applications, such as in (quantum) complexity theory (Section \eqref{sec:RCS}),
one may access the unitary matrices $U(\theta_{1})$, $U(\theta_{2}),\dots,U(\theta_{D})$
for some $D$, and want to determine $U(\theta)$ for all
$\theta$. That is, the path $U(\theta)$ needs to be \textit{learned}
by sampling at some number of points.
$U(\theta)$ can be efficiently determined from samples if, for example,
the entries of $U(\theta)$ have a polynomial structure of degree
$D-1$. More generally, one wishes for $U(\theta)$ to have entries
whose functional dependence on $\theta$ are such that they can be
inferred efficiently by sampling even if the sample points are noisy. 
\subsection{\label{sec:Summary_overview}Motivation and summary of the results}
The rest of this paper has two main sections: \textbf{Section \eqref{sec:Unitary_Homotopy}}
is self-contained and may be read by anyone interested in the structure
of the unitary group. Starting in \textbf{Section \eqref{sec:RCS}} we apply the mathematical results to a problem in quantum complexity theory.\\

\textbf{Section \eqref{sec:Unitary_Homotopy}} includes an explicit
and efficient construction of one-parameter families of unitary matrices
that interpolate between any two fixed unitaries. By 'efficient' we
mean that a modified QR-algorithm is presented (Alg. \eqref{alg:ModifiedQR}) that results
in unitaries whose entries are proportional to polynomial functions of low degrees in the
interpolation parameter. The degrees
of the polynomial functions are quantified (Lemma \eqref{Lem:RationalFunction} and Corollary \eqref{cor:Polynomial}). 
Therefore, we have
in our hands an interpolation scheme between any two given unitaries
that is continuous, is contained in the unitary group everywhere, and has entries that depend on polynomials of low degrees.

The discovery of Berlekamp-Welch algorithm (BW) \cite{welch1986error}
was originally motivated by classical error correction schemes such
as the Reed-Solomon codes \cite{reed1960polynomial}. In these codes,
the messages are encoded in the coefficients of polynomials over finite
fields. BW takes as input a set of points together with the value
of the polynomial at those points. The latter depends on the encoded
messages, which may incur errors in the transmission from the sender
to the receiver. BW is remarkable in that it can exactly recover such
polynomials even if the evaluation of the polynomial at some number
of points is erroneous. Here we prove an extension of BW that can
efficiently interpolate rational functions (Alg. \eqref{alg:(Berlekamp-Welch-for-Rational}).

Lastly, a subsection is devoted to the Haar measure, and the construction
of unitaries whose distributions may be arbitrarily close to the Haar
measure in total variational distance (TVD). We discuss the intimate connection of the Haar measure
 to the QR-factorization of random gaussian
matrices and then provide a new construction of distributions over unitary matrices
that are $\theta-$close in TVD to the Haar measure for any given
small $\theta$. \\

In \textbf{Section \eqref{sec:RCS}} we apply the above mathematical
results to prove a result in quantum complexity theory. Before stating
the results, we provide an introduction to the field.

A quantum computation is a rotation in the Hilbert space of the standard
basis $e_{1}=(1,0,\dots,0)^{T}$. The standard notation denotes $e_{1}$
by $|0\rangle^{\otimes n}$, which is a $2^{n}$ dimensional vector;
the rotation is achieved by the unitary matrix $U$, whose exact form
is fixed by the quantum algorithm being implemented. The final state
of the quantum computation, denoted by $|\psi\rangle$, is the vector
in the Hilbert space $|\psi\rangle\equiv U\left(|0\rangle^{\otimes n}\right)$.
The square of the absolute value of any entry of $|\psi\rangle$ quantifies
the probability of occurrence of that particular outcome upon measurement.
Good quantum algorithms provide $U$'s that, with high probability,
result in outputs that encode the answer to the desired computational
task more efficiently than any classical computer.

One then says that the quantum computation involves $n$ quantum bits
(qubits), each of which corresponds to $\mathbb{C}^{2}$, and that
the circuit $C$ implements $U$. Constrained by the difficulties in the
experimental realizations, $U$ is almost always taken to be a product
of many 'local' unitaries (gates) of the form $\mathbb{I}_{2^{n-1}}\otimes U_{i}$
or $\mathbb{I}_{2^{n-2}}\otimes U_{i,j}$ where $U_{i}$ and $U_{i,j}$
are $2\times2$ and $4\times4$ unitaries respectively. $U_{i}$ acts
on the Hilbert space corresponding to the $i^{\text{th}}$ qubit and
$U_{i,j}$ acts on the joint Hilbert space of the neighboring qubits
$i$ and $j$. One says that the circuit $U$ is \textit{generic}
with respect to the architecture ${\cal A}$ if the local unitaries
$U_{i}$ and $U_{i,j}$ are drawn independently from the Haar measure. 

Currently, one of the most active frontiers in quantum computation
and information science is centered around ``Noisy Intermediate Scale
Quantum (NISQ)'' computers \cite{preskill2018quantum} -{}-noisy
because of the absence of quantum error correction. What can a NISQ
computer do? Would these offer a quantum advantage over any current
classical computer? 

Because of a large industrial push (e.g., from IBM and Google), NISQ
computers with hundred(s) of qubits are at the brink of existence
with the promise of outperforming any classical computer \cite{dalzell2018many}.
A milestone is to prove unambiguously a substantial advantage of a
NISQ computer over classical ones. This event has been termed quantum
supremacy, and we have yet to witness it. It states that there are
computational tasks that a NISQ computer can efficiently perform,
that any classical computer would find formidable. The two main current proposals
to demonstrate this are \texttt{BosonSampling} \cite{aaronson2011computational},
and Random Circuit Sampling (RCS). Our focus in this paper will be
RCS, which roughly speaking, is the task of sampling from the output
distribution of a circuit whose local gates are random. 

Recently \textit{generic} aspects of quantum circuits have been foundational.
Generic in mathematics means with probability one or almost surely.
Generic circuits are therefore circuits whose local gates are random
unitaries. Most natural is to draw them from the Haar measure-- drawn
independently and uniformly from the space of all unitaries. Such
circuits appear in quantum complexity theory \cite{aaronson2016complexity},
the study of black holes \cite{hayden2007black}, holographic models
of quantum gravity \cite{takayanagi2018holographic}, and quantum
supremacy \cite{harrow2017quantum,dalzell2018many}. The latter was
initially proposed to demonstrate that quantum computers have capabilities
beyond classical. Aaronson and Arkhipov first proposed
the \texttt{BosonSampling} problem as a candidate for testing quantum
supremacy \cite{aaronson2011computational}. Later the Google team proposed that random quantum circuits
might demonstrated this supremacy in the near-term quantum devices
\cite{boixo2018characterizing}. We currently know that computing the output probabilities of RCS, even
approximately, is $\#P$-Hard\footnote{\#P is a generalization of NP, which is for decision problems, to
counting problems. In particular, an NP-complete problem is: Does
a 3-SAT instance have any solutions? The answer is a yes or no (i.e.,
zero solutions). Whereas, \#P asks: How many solutions does a 3-SAT
instance have? Therefore, \#P contains NP. } in the worst case \cite{bremner2011classical}. This implies that no exact worst-case classical simulation algorithm exists unless the polynomial hierarchy collapses \cite{aaronson2011computational, bremner2011classical}. An advantage of RCS over \texttt{BosonSampling} is that anti-concentration bounds have been proven \cite{Harrow_Saeed2018,Brandao2013}. This led Bouland et al to prove that the exact average-case
RCS is also $\#P$-Hard \cite{bouland2018quantum} . What is needed and remains open in all supremacy
proposals to date, however, is to prove \textit{approximate} average-case
hardness. The status of the field is summarized in the table below.
\begin{center}
\begin{tabular}{|c|c|c|}
\hline 
 & Exact & Approximate\tabularnewline
\hline 
\hline 
Worst case & \#P & \#P\tabularnewline
\hline 
Average case & \#P & ?\tabularnewline
\hline 
\end{tabular}
\par\end{center}
The application that motivates this paper is to (dis)prove the so
called quantum supremacy conjecture, which we pick up in \textbf{Section
\ref{sec:RCS}}. Although the conjecture remains open, we provide
an entirely a new proof of the exact average case hardness of RCS,
which is alternative to the recent paper by Bouland et al \cite{bouland2018quantum}
and does not use the standard techniques. The standard technique for
proving average case hardness is via polynomial reduction of the worst case hardness to the average, which is based on Lipton's technique \cite{lipton1989new}. This
 essentially says:

\textit{Suppose a problem is known to be $\#P$-Hard in the worst case.
Now suppose the worst case problem, $A$, can be deformed to $A(\theta)$
for $\theta\in[0,1]$ such that $A(1)=A$ is the $\#P$-Hard instance,
and $A(\theta)$ is a polynomial of low degree $D$ in $\theta$.
If $\theta\in[0,1)$ contains at least $D+1$ 'generic' instances
of the problem, then the problem is generically $\#P$-Hard. The reason
is that if the problem were generically easy, then because of the polynomial
structure, the evaluation of $A(\theta)$ at $D+1$ random points
would uniquely determine $A(\theta)$. One can then read off $A(1)$
by polynomial extrapolation. But $A(1)$ is a $\#P$-Hard problem so
the generic instances must have had $\#P$-Hard instances.}

To make use of this polynomial interpolation technique, Aaronson and Arkhipov proposed the BosonSampling, whose underlying computational task is a pemanant, which is a polynomial function of the entries \cite{aaronson2011computational}. 

In RCS, the goal to demonstrate quantum supremacy is to compute 
\begin{equation}
\text{p}_{y}(C)\equiv|\langle y|C|0^{n}\rangle|^{2}.\label{eq:py}
\end{equation}

Quantum circuits, however, do not have a natural polynomial  form. To obtain polynomials corresponding to the generic local circuit, Bouland et al \cite{bouland2018quantum}
deform the quantum gates towards a Haar distribution and then truncate the Taylor series expansions of the exponential functions that arise in Eq. \eqref{eq:Geodesic}.
The truncation results in 'non-unitarity' of the local gates and therefore
the quantum circuit as a whole. They then justify that the errors
due to the truncations are small enough that the average case $\#P$-
Hardness of the non-unitary approximation is still necessary for the approximate average-case hardness conjecture \eqref{Conj1} to be true.

In Section \eqref{sec:RCS}, the results of Section \eqref{sec:Unitary_Homotopy}
are used to give an entirely a new proof of $\#P$-hardness of RCS. We prove that the modified QR algorithm results in probabilities (Eq. \eqref{eq:py}) that are rational functions of the interpolation parameter $\theta$. These can then be efficiently learned using the new BW algorithm for rational functions. Moreover, we prove that proving $\# P$-hardness of  exact computation of Eq. \eqref{eq:py} is necessary for proving the quantum supremacy conjecture. This is ensured by our construction as the local gates will be proved to be $\theta-$close to the Haar distribution in TVD.

In addition to being a more direct proof, the advantages of the new proof over \cite{bouland2018quantum}
include: \\
${ }(1)$ No truncations are needed and the interpolation
is contained in the unitary group everywhere.\\
${ }(2)$ The construction is explicit in that the degrees and coefficients can be quantified;
this might help in proving the quantum supremacy conjecture (see
 \cite[(arXiv version, page 81)]{aaronson2011computational}).\\
${ }(3)$ The new mathematical results may be of independent interest. 
\section{\label{sec:Unitary_Homotopy}Rational and polynomial paths on the unitary
group}
\subsection{\label{sec:Standard_QR}Standard QR decomposition}
The standard QR-decomposition algorithm applied to the columns of
a matrix, $M$, results in an orthonormal set of vectors that comprise
the columns of a unitary matrix $U$, whose linear combinations dictated
by the $R$ matrix result in $M=UR$. It is instructive to quickly
recap this algorithm.

Let $A=[a_{1},\dots,a_{N}]$, $B=[b_{1},\dots,b_{N}]$ where $a_{i}$
and $b_{i}$ are the $i^{th}$ columns of $A$ and $B$ respectively.
Let $M(\theta)=(1-\theta)A+\theta B$ have the QR-decomposition $M(\theta)=U(\theta)R(\theta)$.
Let $m_{i}(\theta)$ be the columns of $M(\theta)=[m_{1}(\theta),\dots,m_{N}(\theta)]$
and denote the columns of $U(\theta)$ by $U(\theta)=[u_{1}(\theta),\dots,u_{N}(\theta)]$.

Comment: Below at times we drop the dependence on $\theta$ for simplicity
and denote the column vectors $m(\theta)$ simply by $m$ etc. It
is obvious that whenever the matrix depends on $\theta$, so do its
columns.

Also recall that the projection of the vector $x$ onto $y$ is
\[
\text{proj}_{y}x=\frac{\langle y,x\rangle}{\left\Vert y\right\Vert ^{2}}y,
\]
where $\left\Vert \centerdot\right\Vert $ denotes the standard Euclidean
$2-$norm. 

The algorithm starts by $v_{1}\equiv m_{1}=(1-\theta)a_{1}+\theta b_{1}$.
Let $u_{1}=v_{1}/\left\Vert v_{1}\right\Vert $, then $v_{2}=m_{2}-\text{proj}_{u_{1}}m_{2}$
and $u_{2}=v_{2}/\left\Vert v_{2}\right\Vert $. Generally,
\[
v_{k}=m_{k}-\sum_{j=1}^{k-1}\text{proj}_{uj}m_{k},\qquad u_{k}=v_{k}/\left\Vert v_{k}\right\Vert .
\]
It is easy to see that the functional form of the entries of any $u_{i}$
is a ratio, whose numerator and denominator are in general sums of
terms some of which involve square roots of polynomials. For example
\begin{align*}
u_{1} & =\frac{(1-\theta)a_{1}+\theta b_{1}}{\sqrt{(1-\theta)^{2}\left\Vert a_{1}\right\Vert ^{2}+\theta^{2}\left\Vert b_{1}\right\Vert ^{2}+\theta(1-\theta)\left[\langle a_{1},b_{1}\rangle+\langle b_{1},a_{1}\rangle\right]}};\\
v_{2} & =\left[(1-\theta)a_{2}+\theta b_{2}\right]-\frac{\theta(1-\theta)\left(\langle a_{1},b_{2}\rangle+\langle b_{2},a_{1}\rangle\right)\left[(1-\theta)a_{1}+\theta b_{1}\right]}{(1-\theta)^{2}\left\Vert a_{1}\right\Vert ^{2}+\theta^{2}\left\Vert b_{1}\right\Vert ^{2}+\theta(1-\theta)\left[\langle a_{1},b_{1}\rangle+\langle b_{1},a_{1}\rangle\right]},
\end{align*}
and $u_{2}=v_{2}/\left\Vert v_{2}\right\Vert $ would involve square
roots of polynomials. The standard QR-algorithm, therefore, is inadequate
for the applications we have in mind as the entries involve untamable
square roots of polynomials that cannot be learned by sampling.
\begin{rem}
In numerical linear algebra the QR-decomposition is performed by more
preferred methods that are more efficient and stable such as
Householder transformations, or Givens rotations \cite{trefethen1997numerical}.
Although these methods, in the standard form, do not yield any polynomial
structure in $U(\theta)$ to be exploited, they can be used to evaluate
the numerical values of $U(\theta)$ for any fixed $\theta$.
\end{rem}
Suppose we have a matrix $A$ that is $\ell\times k$ with $\ell\ge k$.
The computational complexity of QR decomposition depends on the algorithm
and is \cite{trefethen1997numerical}:
\begin{itemize}
\item Householder transformations: $2\ell k^{2}-2k^{3}/3$.
\item Standard QR decomposition procedure: $2\ell k^{2}$.
\item Modified QR decomposition (below) after normalization: $2\ell k^{2}$.
\end{itemize}
\subsection{\label{sec:SolvingQR}Modified QR decomposition and rational function
Berlekamp-Welch interpolation algorithm}
We propose an unnormalized version of the QR-decomposition
algorithm, which when applied to pencils of matrices in Eq. \eqref{eq:QR},
results in a matrix whose columns are orthogonal but not normalized.
The upshot is that the entries of this matrix can be expressed as polynomial or rational functions
of $\theta$.  Below we use the same notion for the columns of the matrices as we did in the standard QR-algorithm above.
\begin{lyxalgorithm}
\label{alg:ModifiedQR}Let $M(\theta)=(1-\theta)A+\theta B$, we seek
the QR-decomposition of $U(\theta)R(\theta)=M(\theta)$. 

The algorithm first solves for the unnormalized vectors $z_{i}$ by
performing the following:
\begin{enumerate}
\item Let $z_{1}=v_{1}=m_{1}$, which is linear in $\theta$.
\item $v_{2}=m_{2}-\text{proj}_{z_{1}}m_{2}$, it is instructive to write
it explicitly
\[
v_{2}=\frac{\left[(1-\theta)a_{2}+\theta b_{2}\right]\left\Vert z_{1}\right\Vert ^{2}-\langle z_{1},m_{2}\rangle z_{1}}{\left\Vert z_{1}\right\Vert ^{2}}
\]
where 
\begin{eqnarray*}
\left\Vert z_{1}\right\Vert ^{2} & = & (1-\theta)^{2}\left\Vert a_{1}\right\Vert ^{2}+\theta^{2}\left\Vert b_{1}\right\Vert ^{2}+\theta(1-\theta)\left[\langle a_{1},b_{1}\rangle+\langle b_{1},a_{1}\rangle\right]\\
\langle z_{1},m_{2}\rangle z_{1} & = & \left[(1-\theta)^{2}\langle a_{1},a_{2}\rangle+\theta^{2}\langle b_{1},b_{2}\rangle+\theta(1-\theta)\left(\langle a_{1},b_{2}\rangle+\langle b_{1},a_{2}\rangle\right)\right]\left[(1-\theta)a_{1}+\theta b_{1}\right]
\end{eqnarray*}
Therefore, every entry of $v_{2}$ is a rational function with numerator
being a polynomial of degree three and denominator a polynomial of
degree two. We can now define 
\[
z_{2}=\left\Vert z_{1}\right\Vert ^{2}v_{2}
\]
which is a polynomial valued vector, whose entries are polynomials
of degree three.
\item In general,\begin{eqnarray}
v_{k} & = & m_{k}-\sum_{j=1}^{k-1}\text{proj}_{z_{j}}m_{k}.\label{eq:v_k}\\
z_{k} & = & \left(\prod_{j=1}^{k-1}\left\Vert z_{j}\right\Vert ^{2}\right)\left\{ m_{k}-\sum_{j=1}^{k-1}\frac{\langle z_{j},m_{k}\rangle z_{j}}{\left\Vert z_{j}\right\Vert ^{2}}\right\} \label{eq:z_k}
\end{eqnarray}
The set of vectors $v_{1},\dots,v_{N}$ are orthogonal but \textit{unnormalized.}
We shall keep them unnormalized to retain the rational function dependence
of the entries on $\theta$ (see Lemma \ref{Lem:RationalFunction}
and Corollary \ref{cor:Polynomial}). Similarly, the set of vectors
$z_{1},\dots,z_{N}$ are orthogonal but unnormalized\textit{ }and
have entries that are polynomials in $\theta$. They are obtained by simply
multiplying through by the common denominator of all $v_{j}$'s.
For example, $z_{1}=m_{1}$, $z_{2}=v_{2}\left\Vert m_{1}\right\Vert ^{2}$
etc..
\item (Optional) Given $z_{1},\dots,z_{N}$, the columns of $U(\theta)$
are obtained by normalizing $z_{1},\dots,z_{N}$. Each entry of $U(\theta)$
will be a ratio of a polynomial function in $\theta$ with the square
root of a polynomial function in $\theta$:
\begin{equation}
u_{k}(\theta)=z_{k}(\theta)/\sqrt{\langle z_{k}(\theta),z_{k}(\theta)\rangle},\quad1\le k\le N.\label{eq:Entries_ModifiedQR}
\end{equation}
This is sufficient for our purposes as we will need the norm-square
of such quantities which are exactly rational functions.
\end{enumerate}
\end{lyxalgorithm}
\begin{defn}
Let $\mathbf{p}_{\ell}(\theta)$ be the set of polynomial \textit{vectors}
of degree $\ell$ in $\theta$; that is, every entry of the vector
$v\in\mathbf{p}_{\ell}(\theta)$ is a polynomial of degree $\ell$
in $\theta$. Let $q_{\ell}(\theta)$ be the set of polynomial \textit{functions}
of degree $\ell$ in $\theta$. The algebra is such that for $v\in\mathbf{p}_{\ell}(\theta)$,
$u\in\mathbf{p}_{r}(\theta)$, $f\in q_{\ell}(\theta)$, and $g\in q_{r}(\theta)$,
we have $fg\in q_{\ell+r}(\theta)$, $\langle u,v\rangle\in q_{\ell+r}(\theta)$,
and $g\text{ }v\in\mathbf{p}_{\ell+r}(\theta)$.
\end{defn}
\begin{defn}
A vector $v(\theta)$ whose entries are rational functions of $\theta$
is denoted by $v(\theta)\in\mathbf{p}_{\ell}(\theta)/q_{r}(\theta)$
if its entries have numerators that are polynomials of degree $\ell$
and denominators are polynomials of degree $r$. 
\end{defn}
\begin{lem}
\label{Lem:RationalFunction}$v_{k}(\theta)$ given by Eq. \eqref{eq:v_k}
in the modified QR algorithm satisfies $v_{k}(\theta)\in\mathbf{p}_{D_{k}+1}(\theta)/q_{D_{k}}(\theta)$,
with $D_{k}=(3^{k-1}-1)$. 
\end{lem}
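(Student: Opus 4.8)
The plan is to prove the claim by induction on $k$, tracking the degrees of the numerator and denominator polynomials through the Gram--Schmidt recursion \eqref{eq:v_k}. The base case $k=1$ is immediate: $v_1(\theta)=m_1(\theta)=(1-\theta)a_1+\theta b_1\in\mathbf{p}_1(\theta)$, which matches $D_1=(3^0-1)=0$, so $v_1\in\mathbf{p}_{1}(\theta)/q_0(\theta)=\mathbf{p}_1(\theta)$. For the inductive step, I would first pass from the $v_j$'s to the polynomial vectors $z_j$, since the recursion for $v_k$ involves dividing by $\|v_j\|^2$ (or equivalently $\|z_j\|^2$) and it is cleaner to clear denominators. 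Concretely, assuming $v_j\in\mathbf{p}_{D_j+1}(\theta)/q_{D_j}(\theta)$ for $j<k$, I would write $v_j = z_j / c_j$ where $z_j\in\mathbf{p}_{D_j+1}(\theta)$ is the polynomial vector and $c_j\in q_{D_j}(\theta)$ is the (scalar) common denominator, so that $\|z_j\|^2=\langle z_j,z_j\rangle\in q_{2D_j+2}(\theta)$ by the stated algebra of the degree classes.

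Next I would substitute into $v_k = m_k - \sum_{j=1}^{k-1}\frac{\langle z_j,m_k\rangle}{\|z_j\|^2}z_j$ and read off the degrees. Each summand has numerator $\langle z_j,m_k\rangle z_j$, a polynomial vector of degree $(D_j+1)+1+(D_j+1)=2D_j+3$, divided by $\|z_j\|^2$ of degree $2D_j+2$; after multiplying through by the common denominator $\prod_{j=1}^{k-1}\|z_j\|^2$, one gets the polynomial vector $z_k$ of \eqref{eq:z_k} together with its scalar denominator. The key quantity to control is the denominator degree $D_k$, which should come out to $\sum_{j=1}^{k-1}(2D_j+2) = 2\sum_{j=1}^{k-1}D_j + 2(k-1)$, and then I would verify by a routine computation that the recursion $D_k = 2\sum_{j=1}^{k-1}D_j + 2(k-1)$ with $D_1=0$ has solution $D_k = 3^{k-1}-1$ — for instance by noting $D_{k+1}-D_k = 2D_k+2$, i.e. $D_{k+1}=3D_k+2$, whence $D_{k+1}+1 = 3(D_k+1)$, giving $D_k+1=3^{k-1}$. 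The numerator degree of $v_k$ is then one more than this, namely $D_k+1$, consistent with the claimed $v_k\in\mathbf{p}_{D_k+1}(\theta)/q_{D_k}(\theta)$.

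The one genuine subtlety — the part I would be most careful about — is making sure the degree bookkeeping is done with a \emph{single common denominator} rather than naively adding fractions term by term, and checking that no unexpected cancellation lowers (or, more dangerously for an upper bound, that nothing raises) the degree beyond the stated value. In particular I would want to confirm that $m_k$ itself, of degree $1$, does not exceed the degree $D_k+1$ of the other terms once everything is placed over the denominator $\prod_{j<k}\|z_j\|^2$ — i.e. that $1 + D_k \ge 1$, which is automatic. I would also remark that the statement is about the generic (worst-case) degree; for special $A,B$ the actual degrees may drop, but the bound $\mathbf{p}_{D_k+1}(\theta)/q_{D_k}(\theta)$ always holds, and it is this bound that is needed for the Berlekamp--Welch interpolation in Algorithm \eqref{alg:(Berlekamp-Welch-for-Rational}. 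Finally, passing to $z_k = \bigl(\prod_{j=1}^{k-1}\|z_j\|^2\bigr) v_k$ and tracking the same degrees gives the companion statement that $z_k$ is a polynomial vector of degree $D_k+1$, recovering the explicit small cases $z_1\in\mathbf{p}_1$, $z_2\in\mathbf{p}_3$, $z_3\in\mathbf{p}_{13}$ noted in the commented-out example.
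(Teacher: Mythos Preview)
Your proof is correct and follows essentially the same inductive degree-tracking as the paper, arriving at the identical recursion $D_{k+1}=3D_k+2$ and solving it the same way via $D_{k+1}+1=3(D_k+1)$; the paper's only cosmetic difference is that it peels off the single last projection $\text{proj}_{z_{k-1}}m_k$ and observes that the remaining bracketed sum has the same degree type as $v_{k-1}$, rather than clearing all denominators $\prod_{j<k}\|z_j\|^2$ at once as you do. One small slip in your final sentence: the lemma gives $z_3\in\mathbf{p}_{D_3+1}=\mathbf{p}_9$, not $\mathbf{p}_{13}$ --- the commented-out example you cite is itself inconsistent with the stated formula, which is presumably why it was commented out.
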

\begin{proof}
We first prove that $v_{k}(\theta)\in\mathbf{p}_{D_{k}+1}(\theta)/q_{D_{k}}(\theta)$
for some positive integer $D_{k}$. Note that $v_{1}(\theta)=z_{1}(\theta)=m_{1}(\theta)\in\mathbf{p_{1}}(\theta)/q_{0}(\theta)$,
suppose $v_{k-1}(\theta)\in\mathbf{p}_{D_{k-1}+1}(\theta)/q_{D_{k-1}}(\theta)$,
then $z_{k-1}\in\mathbf{p}_{D_{k-1}+1}(\theta)$ and by Eq. \eqref{eq:v_k}
\[
v_{k}=\left\{ m_{k}-\sum_{j=1}^{k-2}\text{proj}_{z_{j}}m_{k}\right\} -\text{proj}_{z_{k-1}}m_{k}
\]
and the sum of the terms in the braces are of the same type as $v_{k-1}(\theta)$
by construction. Hence, by the induction hypothesis they are of the
type $\mathbf{p}_{D_{k-1}+1}(\theta)/q_{D_{k-1}}(\theta)$. Further,
$\text{proj}_{z_{k-1}}m_{k}=\langle z_{k-1},m_{k}\rangle z_{k-1}/\left\Vert z_{k-1}\right\Vert ^{2}\in\mathbf{p}_{2D_{k-1}+3}(\theta)/q_{2D_{k-1}+2}(\theta)$.
So we have
\[
v_{k}\in\frac{\mathbf{p}_{D_{k-1}+1}(\theta)}{q_{D_{k-1}}(\theta)}-\frac{\mathbf{p}_{2D_{k-1}+3}(\theta)}{q_{2D_{k-1}+2}(\theta)}\in\frac{\mathbf{p}_{3D_{k-1}+3}(\theta)}{q_{3D_{k-1}+2}(\theta)}.
\]
Letting $D_{k}=3D_{k-1}+2$ we arrive at the desired result. Lastly,
$D_{k}=(3^{k-1}-1)$ is the solution of the recursion relation $D_{k+1}=3D_{k}+2$
with the initial condition $D_{1}=0$.
\end{proof}
\begin{cor}
\label{cor:Polynomial}Let $z_{k}=q_{D_{k}}(\theta)\text{ }v_{k}$,
then $z_{k}\in\mathbf{p}_{D_{k}+1}(\theta)$. That is, each entry
of $z_{k}$ in Eq. \eqref{eq:z_k} is a polynomial function of degree $D_{k}+1=3^{k-1}$.
\end{cor}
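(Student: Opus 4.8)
The plan is to deduce the statement directly from Lemma~\ref{Lem:RationalFunction} together with the explicit form of $z_k$ in Eq.~\eqref{eq:z_k}, with essentially no new work. First I would recall that the Lemma gives $v_k(\theta)\in\mathbf{p}_{D_k+1}(\theta)/q_{D_k}(\theta)$, i.e.\ every entry of $v_k$ is a ratio of a polynomial of degree $D_k+1$ over a polynomial of degree $D_k$; moreover, reading off the induction in the proof of the Lemma, the denominator is the \emph{same} scalar polynomial for every entry, namely the common factor $\prod_{j=1}^{k-1}\|z_j\|^2$ that multiplies the bracket in Eq.~\eqref{eq:z_k}.

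Next I would check that this common denominator has degree exactly $D_k$, so that it really is the $q_{D_k}(\theta)$ named in the corollary: since $z_j\in\mathbf{p}_{D_j+1}(\theta)$ we have $\|z_j\|^2=\langle z_j,z_j\rangle\in q_{2D_j+2}(\theta)$, hence $\prod_{j=1}^{k-1}\|z_j\|^2\in q_{S}(\theta)$ with $S=\sum_{j=1}^{k-1}(2D_j+2)=\sum_{j=1}^{k-1}2\cdot 3^{j-1}=3^{k-1}-1=D_k$, consistent with the denominator degree asserted by the Lemma. Then $z_k=q_{D_k}(\theta)\,v_k$ is simply $v_k$ cleared of its common denominator, so each entry of $z_k$ equals the corresponding numerator polynomial of $v_k$, which has degree $D_k+1$. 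Thus $z_k\in\mathbf{p}_{D_k+1}(\theta)$, and substituting $D_k=3^{k-1}-1$ gives $D_k+1=3^{k-1}$, as claimed.

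I do not expect any real obstacle; the single point deserving a line of care is that the denominators of the various entries of $v_k$ genuinely coincide, so that multiplying by one scalar polynomial $q_{D_k}(\theta)$ clears all of them at once and produces a bona fide polynomial vector rather than merely a vector whose entries happen to be polynomials. This is immediate from the shape of Eq.~\eqref{eq:z_k}, since $m_k-\sum_{j=1}^{k-1}\langle z_j,m_k\rangle z_j/\|z_j\|^2$ has all of its entries written over the common denominator $\prod_{j=1}^{k-1}\|z_j\|^2$. I would also flag that $D_k+1$ is to be read as an upper bound on the degree, attained for generic $A$ and $B$, which is all that the subsequent interpolation-by-sampling argument requires.
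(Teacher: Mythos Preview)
Your proposal is correct and matches the paper's approach: the paper states the corollary without proof, treating it as immediate from Lemma~\ref{Lem:RationalFunction} and the definition in Eq.~\eqref{eq:z_k}. Your extra verification that $\sum_{j=1}^{k-1}(2D_j+2)=3^{k-1}-1=D_k$, so the common denominator $\prod_{j=1}^{k-1}\|z_j\|^2$ has exactly the right degree, is a useful consistency check that the paper omits.
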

\begin{rem}
Because the local gates in a quantum circuit that implements a quantum
computation are represented by $2\times2$ or $4\times4$ (unitary)
matrices, we note that $D_{2}+1=4$ and $D_{4}+1=28$.\\
\end{rem} 
We now turn to the issue of uniquely determining a rational function
by efficient sampling.
\begin{lem}
\label{fact:Rational-function}Any rational function of degree $(k_{1},k_{2})$
in one variable $\theta$ has the general form
\[
F(\theta)=\frac{a_{k_{1}}\theta^{k_{1}}+a_{k_{1}-1}\theta^{k_{1}-1}+\cdots+a_{0}}{b_{k_{2}}\theta^{k_{2}}+b_{k_{2}-1}\theta^{k_{2}-1}+\cdots+b_{0}}
\]
and is uniquely determined by $k_{1}+k_{2}+1$ points provided that
$F(\theta_{i})=f_{i}<\infty$ for $i\in[k_{1}+k_{2}+1]$ are independent
conditions and the numerator and denominator are relatively prime.\end{lem}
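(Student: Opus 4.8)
The plan is to count degrees of freedom and impose a system of linear equations. Write $F(\theta) = P(\theta)/Q(\theta)$ with $P(\theta) = a_{k_1}\theta^{k_1} + \cdots + a_0$ and $Q(\theta) = b_{k_2}\theta^{k_2} + \cdots + b_0$. Naively there are $(k_1+1)+(k_2+1)$ coefficients, but $F$ is unchanged under $(P,Q)\mapsto(\lambda P,\lambda Q)$ for nonzero scalar $\lambda$, so the genuine number of parameters is $(k_1+1)+(k_2+1)-1 = k_1+k_2+1$. This already signals that $k_1+k_2+1$ conditions is the right count.

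First I would normalize, say by fixing $b_0 = 1$ (or, if one prefers to avoid the case $b_0 = 0$, by fixing some other convenient nonzero coefficient or the overall scale via $\sum|a_i|^2 + \sum|b_j|^2 = 1$ together with a phase); this removes the scaling ambiguity and leaves exactly $k_1 + k_2 + 1$ unknowns. Next, each interpolation datum $F(\theta_i) = f_i$, being finite, means $Q(\theta_i)\neq 0$ and can be cleared to the \emph{linear} constraint
\[
P(\theta_i) - f_i\, Q(\theta_i) = 0, \qquad i = 1,\dots, k_1+k_2+1,
\]
which is linear in the unknown coefficients $(a_0,\dots,a_{k_1},b_0,\dots,b_{k_2})$. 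Thus I have a square linear system of size $k_1+k_2+1$. The hypothesis that the points give ``independent conditions'' is exactly the statement that this coefficient matrix (a confluent-Vandermonde-type matrix built from the $\theta_i$ and the $f_i$) is nonsingular, so the system has a unique solution.

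Finally I would argue that this unique solution genuinely reconstructs $F$. Let $(P^\ast, Q^\ast)$ be the given representation (with numerator and denominator relatively prime, as assumed) and let $(P,Q)$ be any solution of the linear system under the same normalization. Then $P Q^\ast - P^\ast Q$ is a polynomial of degree at most $k_1+k_2$ that vanishes at the $k_1+k_2+1$ distinct points $\theta_i$ (using $P(\theta_i) = f_i Q(\theta_i)$ and $P^\ast(\theta_i) = f_i Q^\ast(\theta_i)$, together with $Q^\ast(\theta_i)\neq 0$), hence is identically zero; so $P/Q = P^\ast/Q^\ast = F$ as rational functions. The main obstacle — really the only subtle point — is pinning down the precise meaning of ``independent conditions'': one must ensure the $k_1+k_2+1$ linear constraints are linearly independent (equivalently that no nontrivial lower-degree rational function with the same values exists), and that the coprimality assumption is what rules out spurious solutions coming from a common factor shared by $P$ and $Q$. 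I would phrase the statement so that ``independent conditions'' is taken to mean precisely the nonsingularity of the associated matrix, and note that for generic choices of distinct $\theta_i$ this holds automatically.
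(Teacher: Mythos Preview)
Your proposal is correct and follows essentially the same route as the paper: remove the overall scaling ambiguity to leave $k_{1}+k_{2}+1$ unknowns, clear the denominator so that each datum $F(\theta_i)=f_i$ becomes a linear equation in the coefficients, and invoke the ``independent conditions'' hypothesis to conclude the square linear system has a unique solution. The paper normalizes by factoring out $a_{k_1}/b_{k_2}$ rather than fixing $b_0=1$, but this is an inessential choice; your additional $PQ^\ast-P^\ast Q$ argument is a welcome clarification that the paper leaves implicit.
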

\begin{proof}
Since a rational function is determined up to a constant multiple
of numerator and denominator, we can factor out $a_{k_{1}}/b_{k_{2}}$,
and use homogeneous coordinates with $k_{1}+k_{2}+1$ unknowns. By
multiplying both sides by the denominator and then evaluating $F(\theta)$
at $k_{1}+k_{2}+1$ points $F(\theta_{i})=f_{i}$, the coefficients
become the solution of the linear system of equations in $(k_{1}+k_{2}+1)$
variables. Given that the $f_{i}$ are independent, the coefficients
are uniquely determined (unique point of intersection of hyperplanes).
Lastly, $f_{i}<\infty$ is to emphasize that we discard any $\theta_{i}$
that is a root of the denominator; such $\theta$'s are of measure
zero anyway.
\end{proof}
In coding theory, and especially in Reed-Solomon codes \cite{reed1960polynomial},
the messages $a_{1},\dots,a_{k}$ may be encoded into a polynomial
$a_{1}+a_{2}\theta+\cdots+a_{k}\theta^{k}$, which then is evaluated
at $n>k+t$ points. Then, the decoding procedure recovers the polynomial
and hence the message exactly despite $t$ errors. The decoding procedure
relies on Berlekamp-Welch (BW) algorithm for polynomial interpolation
\cite{welch1986error,SudanLect}. BW can be extended to interpolate
rational functions. The proof follows Sudan's and is a generalization
of it from polynomial to rational functions \cite{SudanLect}.
\begin{defn}
(Error polynomial) \label{Def: ErrorPolynomial}Suppose $f=(f_{1},\dots,f_{n})$
is a vector. Let $F(\theta)$ be a rational function of degree $(k_{1},k_{2})$.
We define the error polynomial $E(\theta)$ as one that satisfies
\[
E(\theta_{i})=0\quad\text{if}\quad F(\theta_{i})\ne f_{i},\quad\text{deg}(E(\theta))\le t.
\]
\end{defn}
\begin{lyxalgorithm}
\label{alg:(Berlekamp-Welch-for-Rational}(Berlekamp-Welch for Rational
Functions) Given $(\theta_{1},f_{1})$, $(\theta_{2},f_{2})$, ...,
$(\theta_{n},f_{n})$, find a rational function $F(\theta)$ of degree
$(k_{1},k_{2})$ exactly by evaluating it at $n>k_{1}+k_{2}+2t$ points
despite $t$ errors in the evaluation points:
\[
|\{i\in[n]|F(\theta_{i})\ne f_{i}\}|\le t.
\]
\end{lyxalgorithm}
\begin{proof}
The error polynomial by Def. \eqref{Def: ErrorPolynomial} satisfies
\begin{equation}
E(\theta_{i})F(\theta_{i})=E(\theta_{i})f_{i}.\label{eq:ErrorPoly}
\end{equation}
Let $W(\theta_{i})\equiv E(\theta_{i})f_{i}$, which implies that
$f_{i}=W(\theta_{i})/E(\theta_{i})$. Since $W(\theta)=E(\theta)F(\theta)$
is a $(k_{1}+t,k_{2})$ rational function, by Eq. \eqref{eq:ErrorPoly},
$f_{i}$ is a $(k_{1}+t,k_{2}+t)$ rational function of $\theta$.
By Lemma \eqref{fact:Rational-function}, the linear system defined
by Eq. \eqref{eq:ErrorPoly}, has a solution as long as $n>k_{1}+k_{2}+2t$.
If $W(\theta)/E(\theta)$ results in a rational function of degree
$(k_{1},k_{2})$ we are done and we simply output it as $F(\theta)$,
otherwise we decide that there were too many errors. 

Can the algorithm find distinct $(E_{1}(\theta),W_{1}(\theta))$ and
$(E_{2}(\theta),W_{2}(\theta))$? We now show that $W_{1}(\theta)/E_{1}(\theta)$
and $W_{2}(\theta)/E_{2}(\theta)$ are equal, which means that $F(\theta)$
is learned uniquely even if there are multiple solutions. We have
\[
\frac{W_{1}(\theta)}{E_{1}(\theta)}=\frac{W_{2}(\theta)}{E_{2}(\theta)}\iff E_{1}(\theta)W_{2}(\theta)=E_{2}(\theta)W_{1}(\theta).
\]
Recall that both sides are bounded degree rational functions (of degree
$(k_{1}+2t,k_{2})$). So by evaluating them at enough points we can
determine them uniquely (Lemma \eqref{fact:Rational-function}). Since
at every $\theta_{i}$
\[
E_{1}(\theta_{i})f_{i}=W_{1}(\theta_{i}),\quad E_{2}(\theta_{i})f_{i}=W_{2}(\theta_{i}),
\]
solving for $f_{i}$, we have $E_{1}(\theta_{i})W_{2}(\theta_{i})f_{i}=f_{i}E_{2}(\theta_{i})W_{1}(\theta_{i})$
at every $\theta_{i}$. If $f_{i}=0$, then by Eq. \eqref{eq:ErrorPoly}
$W(\theta_{i})=0$, otherwise we just cancel the $f_{i}$. This proves
the claim that $F(\theta)=E_{1}(\theta)/W_{1}(\theta)=E_{2}(\theta)/W_{2}(\theta)$.
Since $n>k_{1}+k_{2}+2t$, we are guaranteed to have enough points.
 \end{proof}
\begin{rem}
 Here we gave an algorithm in which the number of errors $t<(n-k_{1}-k_{2})/2$,
but as in standard BW algorithm, it is entirely possible to find algorithms
that can handle more errors. The above is sufficient for our purposes
so we leave finding such algorithms for future work.
 \end{rem}
 \begin{rem}
 As in the standard BW algorithm, this algorithm works just as well
over finite fields, reals or complex numbers. As far as we know, the
performance with respect to approximate $f_{i}$'s is not
known.
\end{rem}
\subsection{\label{sec:Close_to_Haar} The Haar measure and distributions close
to it in total variational distance (TVD)}
 Let $\mathbb{O}(N)$, $\mathbb{U}(N)$, and $\mathbb{SP}(2N)$ denote
the set of orthogonal, unitary, and symplectic matrices respectively.
The entries of these matrices are drawn from real ($\beta=1$), complex
($\beta=2$), and ($\beta=4$) respectively. In what follows we ignore
$\beta=4$ case. The set of such matrices with determinant equal to
one are, respectively, denoted by $\mathbb{SO}(N)$ and $\mathbb{SU}(N)$.
Moreover, $\mathbb{O}(N)$, $\mathbb{SO}(N)$ form subgroups of the
set of $N\times N$ real matrices. Similarly, $\mathbb{U}(N)$ and
$\mathbb{SU}(N)$ form subgroups of the set of $N\times N$ complex
matrices. If $G$ is any one of the matrix groups defined above, then
a \textit{uniform random element} of $G$ is a matrix $V\in G$ whose
distribution is\textit{ translation invariant}, which means for any
fixed $M\in G$,
\[
VM\stackrel{d}{=}MV\stackrel{d}{=}V,
\]
where $\stackrel{d}{=}$ denotes equality in the distribution sense.
Alternatively, the the distribution of a uniform random element of
$G$ is a translation invariant probability measure. Below we denote
both ``orthogonal'' and ``unitary'' simply by unitary. A standard
theorem of compact classical matrix groups is the following (See E.
Meckes' excellent notes \cite{EMeckes_IAS2014}). 
\begin{thm}
Let $G$ be any of $\mathbb{O}(N)$, $\mathbb{SO}(N)$, $\mathbb{U}(N)$
or $\mathbb{SU}(N)$. Then there is a unique translation-invariant
probability measure on G, which is called the Haar measure. 
\end{thm}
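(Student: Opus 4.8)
The plan is to exploit that each of $\mathbb{O}(N)$, $\mathbb{SO}(N)$, $\mathbb{U}(N)$, $\mathbb{SU}(N)$ is a \emph{compact} topological (indeed Lie) group: the defining relations $V^{\dagger}V=\mathbb{I}$ (and $\det V=1$ where relevant) are closed conditions, and $\|V\|$ is bounded since every column of $V$ is a unit vector, so $G$ is a closed bounded subset of the space of $N\times N$ real or complex matrices; moreover matrix multiplication and inversion are polynomial, hence continuous. Existence of a translation-invariant Radon probability measure on a compact group is classical, and I would record one of two arguments. The abstract route: act on the weak-$*$ compact convex set of Borel probability measures on $G$ by left translations and invoke a fixed-point theorem of Markov--Kakutani type to produce a left-invariant probability measure $\mu_{L}$; since $G$ is compact it is unimodular, so $\mu_{L}$ is automatically right-invariant and inversion-invariant, and it already has total mass one.

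A second, more concrete route fits the theme of Section~\ref{sec:Close_to_Haar} and I would present it in parallel. Let $Z$ be an $N\times N$ matrix with i.i.d.\ real (resp.\ complex) standard Gaussian entries; its density is proportional to $\exp(-\tfrac12\,\mathrm{tr}(Z^{\dagger}Z))$ (resp.\ $\exp(-\mathrm{tr}(Z^{\dagger}Z))$), which is invariant under $Z\mapsto MZ$ and $Z\mapsto ZM$ for $M\in\mathbb{O}(N)$ (resp.\ $\mathbb{U}(N)$) because these maps preserve $\mathrm{tr}(Z^{\dagger}Z)$ and have unit Jacobian. With probability one $Z$ is invertible; let $Z=QR$ be the QR decomposition with $R$ having positive real diagonal, which makes $Q$ a measurable function of $Z$. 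Uniqueness of this QR factorization gives $MZ=(MQ)R$, so the law of $Q$ is left-invariant and is therefore a translation-invariant probability measure on $\mathbb{O}(N)$ or $\mathbb{U}(N)$. For $\mathbb{SO}(N)$ and $\mathbb{SU}(N)$ one post-multiplies by $\mathrm{diag}(1,\dots,1,\overline{\det Q})$ to land in the determinant-one subgroup (this map commutes with left translation by an element of the subgroup, preserving invariance), or one simply conditions the measure on the subgroup.

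For uniqueness, suppose $\mu$ and $\nu$ are both translation-invariant Borel probability measures on $G$. For any $f\in C(G)$,
\[
\int_{G} f\, d\mu \;=\; \int_{G}\!\int_{G} f(g)\,d\mu(g)\,d\nu(h) \;=\; \int_{G}\!\int_{G} f(hg)\,d\mu(g)\,d\nu(h) \;=\; \int_{G}\!\int_{G} f(hg)\,d\nu(h)\,d\mu(g) \;=\; \int_{G} f\, d\nu ,
\]
where the steps use that $\nu$ is a probability measure, left-invariance of $\mu$ in the inner integral, Fubini on the compact space $G\times G$, and right-invariance of $\nu$ in the inner integral. Since this holds for every $f\in C(G)$ and $G$ is a compact metric space, the Riesz representation theorem forces $\mu=\nu$. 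The substantive step is \emph{existence} — the fixed-point/averaging argument in the abstract route, or in the concrete route the check that the sign/phase adjustment keeps the matrix inside $\mathbb{SO}(N)$ / $\mathbb{SU}(N)$ without destroying invariance — whereas uniqueness is the short formal computation above. As this is a standard theorem (see \cite{EMeckes_IAS2014}), recording either argument suffices; the Gaussian--QR route is preferable here since it feeds directly into the discussion that follows.
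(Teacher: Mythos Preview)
Your proposal is correct. Note, however, that the paper does not actually prove this theorem: it is stated as a standard fact with a reference to Meckes' notes \cite{EMeckes_IAS2014}, and no proof is given. What the paper does do, in the paragraphs immediately following the theorem, is sketch the explicit Gaussian--QR construction of Haar measure on $\mathbb{O}(N)$ and $\mathbb{U}(N)$ --- precisely your second existence route --- including the observation that the Gaussian density is unitarily invariant and that Gram--Schmidt commutes with left multiplication by a fixed unitary. The paper does not address uniqueness, does not give the abstract Markov--Kakutani argument, and does not discuss the passage to $\mathbb{SO}(N)$ or $\mathbb{SU}(N)$.

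So your concrete route coincides with the paper's informal discussion, while your abstract existence argument and your Fubini-based uniqueness computation are genuine additions. Since the theorem is quoted rather than proved in the paper, your write-up is strictly more complete; the Gaussian--QR portion is the part that dovetails with the surrounding text and is indeed the one worth emphasizing here.
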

To explicitly and geometrically construct the Haar measure, one starts
by considering an $N\times N$ matrix $X$ whose entries are standard
gaussians. The joint probability density of the entries is:
\begin{equation}
d\mu_{\beta}(X)=\left(2\pi\right)^{-\beta N^{2}/2}\prod_{i,j=1}^{N}dx_{i,j}\text{ }\exp\left(-|x_{i,j}|^{2}/2\right),\label{eq:GaussianMeasure}
\end{equation}
where $\beta=1$ corresponds to real, $\beta=2$ to complex and $\beta=4$
to quaternion entries. It is easy to check that this measure is invariant
under a left-multiplication by a unitary matrix $V$. That
is, by the change of variables $y_{i,j}=[VX]_{i,j}$, one exactly obtains
the foregoing equations with $x_{i,j}$ replaced by $y_{i.j}$ everywhere.

For compact Lie groups, left translation invariance implies right translation invariance and therefore the Haar measure.

Although we have a measure that is invariant under left and right
multiplications by any unitary, the space of matrices being described
is not unitary. The standard way to proceed is to orthonormalize the
columns of $X$ by a QR-factorization $X=UR$, where $U$ is the desired
Haar measure unitary. $U$ is exactly the unitary matrix that results
by performing a Gram-Schmidt process on the columns of $X$. 
\begin{rem}
Although Gram-Schmidt results in unitaries from Haar measure, numerically stable QR algorithms such as Householder, do not perform Gram-Schmidt and can result in distributions different than Haar. To fix this, one must constrain the diagonal entries of $R$ to be positive and then use whatever blackbox QR algorithm available \cite{mezzadri2006generate}. 
\end{rem}
It is easy to see that the translation invariance of the space of such unitary matrices is preserved as the
Gram-Schmidt process (equivalently the QR-decomposition) commutes
with left multiplication by a fixed unitary matrix as we
now show.

To see this, let $X_{1},\dots,X_{N}$ be the columns of a matrix $X$.
The Gram-Schmidt process orthonormalizes these columns by systematically
removing the projections of the others. For example, starting from
$X_{1}$, the second column $X_{2}$ is replaced by $X_{2}-\langle X_{1},X_{2}\rangle X_{2}$.
Now suppose we left multiply the unitary matrix $V$, whereby the
columns become $VX_{1},VX_{2},\dots,VX_{N}$. The resulting second
column, by the unitarity of $V$, becomes
\[
VX_{2}-\langle VX_{1},VX_{2}\rangle VX_{1}=VX_{2}-\langle X_{1},X_{2}\rangle VX_{1}.
\]
The right hand side is equivalent to first doing Gram-Schmidt and
then multiplying the resulting vectors by $V$.  In other words, one can first apply a rotation to all the columns and then orthonormalize or first orthonormalize and then apply the rotation. 

In conclusion, if we start from the space of Gaussian matrices we have a unitary
invariant measure, which then through QR factorization results in
a translation invariant measure over the space of unitary matrices
(i.e., Haar measure).

We now show that we can achieve measures arbitrarily close the Haar
using pencils of matrices proposed in the interpolation in Eq. \eqref{eq:MatrixInterpolation}:
\begin{lem}
\label{lem:Theta_Haar}Consider the QR-decomposition $U(\theta)R(\theta)=(1-\theta)X+\theta\mathbb{I}$,
where the entries of $X$ are independently and identically distributed
gaussians $x_{i,j}\sim{\cal N}_{\beta}(0,1)$. Then for $\theta\ll1$,
the distribution over $U(\theta)$ is ${\cal O}(\theta)-$close to the Haar
measure.
\end{lem}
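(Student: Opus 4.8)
The plan is to reduce the claim to contractivity of total variation distance under measurable maps, after first removing $\theta$ from the $Q$-factor. The unitary factor in a QR-factorization is invariant under multiplication of the factored matrix by a positive scalar (with the convention that $R$ has positive diagonal, exactly so). Since
\[
(1-\theta)X+\theta\mathbb{I}=(1-\theta)\bigl(X+\epsilon\,\mathbb{I}\bigr),\qquad\epsilon:=\tfrac{\theta}{1-\theta}=\theta+O(\theta^{2}),
\]
the matrix $U(\theta)$ produced by the decomposition is exactly the $Q$-factor of the shifted Gaussian matrix $Y:=X+\epsilon\,\mathbb{I}$. By the discussion preceding the lemma, the $Q$-factor of $X$ itself (Gram--Schmidt with the positive-diagonal convention) is Haar-distributed. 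Hence it suffices to bound $\mathrm{TVD}$ between the law of $Y$ and the law of $X$ at the level of the matrices, before QR is applied.

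\textbf{Step 2: TVD between the shifted and unshifted Gaussian ensembles.} $Y$ is Gaussian with the same covariance as $X$ and mean $\epsilon\,\mathbb{I}$, so from \eqref{eq:GaussianMeasure} its density relative to that of $X$, after completing the square, equals $\exp\!\bigl(\epsilon\,\mathrm{Re}\,\mathrm{tr}(X)-\tfrac12\epsilon^{2}N\bigr)$ (for $\beta=1$ read $\mathrm{Re}\,\mathrm{tr}=\mathrm{tr}$). Therefore
\[
\mathrm{TVD}\bigl(\mathrm{law}(Y),\mathrm{law}(X)\bigr)=\tfrac12\,\mathbb{E}_{X}\Bigl|\,\exp\!\bigl(\epsilon\,\mathrm{Re}\,\mathrm{tr}(X)-\tfrac12\epsilon^{2}N\bigr)-1\,\Bigr|.
\]
Since $\mathrm{Re}\,\mathrm{tr}(X)$ is a centered Gaussian of variance $N$, a first-order expansion of the exponential (with the Gaussian tail controlling the remainder) gives $\mathrm{TVD}=O(\epsilon\sqrt{N})=O(\theta\sqrt{N})$, which is $O(\theta)$ for fixed $N$. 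Equivalently and more cleanly, the standard mean-shift formula gives $\mathrm{KL}\bigl(\mathrm{law}(Y)\,\|\,\mathrm{law}(X)\bigr)=\tfrac12\epsilon^{2}N$, so Pinsker's inequality yields $\mathrm{TVD}\le\tfrac12\epsilon\sqrt{N}$ directly.

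\textbf{Step 3: data processing, and the obstacle.} The assignment $M\mapsto(\text{$Q$-factor of }M)$ is a measurable function defined off the measure-zero set where $M$ is rank-deficient (or where the diagonal of $R$ vanishes), which is a null set for \emph{both} Gaussian ensembles since $Y$ is still absolutely continuous. Pushing both laws forward through this map cannot increase total variation distance, so
\[
\mathrm{TVD}\bigl(\mathrm{law}(U(\theta)),\ \mathrm{Haar}\bigr)\ \le\ \mathrm{TVD}\bigl(\mathrm{law}(Y),\mathrm{law}(X)\bigr)\ =\ O(\theta).
\]
The argument carries no serious difficulty; the two points requiring care are (i) confirming that the locus on which QR is ill-defined is negligible under the shifted ensemble as well, which is immediate, and (ii) being honest that the implied constant scales like $\sqrt{N}$ — harmless here because the lemma is invoked only for the $2\times2$ and $4\times4$ local gates, where $N$ is a fixed small constant.
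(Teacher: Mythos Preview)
Your proof is correct and follows the same two-step skeleton as the paper: bound the total variation distance between the perturbed and unperturbed Gaussian ensembles, then pass through QR using contractivity of TVD under measurable maps (the paper phrases this last step as a coupling argument, which is equivalent). The difference is in the execution of Step~2. The paper compares $(1-\theta)X+\theta\mathbb{I}$ directly to $X$ by expanding the density ratio \eqref{eq:TVDformula} to first order in $\theta$; because the variance rescaling touches all $N^{2}$ entries, this yields a bound $O(\beta N^{2}\theta)$. You instead observe that the $Q$-factor is scale-invariant, strip off the factor $(1-\theta)$, and are left comparing $X+\epsilon\mathbb{I}$ to $X$, a pure mean shift affecting only the $N$ diagonal entries; Pinsker then gives $O(\sqrt{N}\,\theta)$. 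Your route is cleaner and yields a sharper $N$-dependence, though both are adequate here since $N\in\{2,4\}$.
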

\begin{proof}
Let us denote by $Z(X;\theta)=(1-\theta)X+\theta\mathbb{I}$. Clearly,
$Z$ is a linear transformation of a gaussian random matrix $X$,
whose distribution is given in Eq. \eqref{eq:GaussianMeasure}. Therefore,
the distribution over $Z(X;\theta)$ denoted by $d\nu_{\beta}(Z(X))$
is:
\begin{eqnarray}
d\nu_{\beta}(Z(X)) & = & \left(2\pi(1-\theta)^{2}\right)^{-\beta N^{2}/2}\left(\prod_{i,j=1}^{N}dx_{i,j}\right)\exp\left[-\frac{1}{2(1-\theta)^{2}}\left(\sum_{i\ne j}|x_{i,j}|^{2}+\sum_{i}|x_{i,i}-\theta|^{2}\right)\right]\nonumber \\
 & = & d\mu_{\beta}(X)\left\{ 1+\theta\left[\beta N^{2}-\sum_{i}\left(|x_{i,i}|^{2}-\text{Re}(x_{i,i})\right)\right]+{\cal O}(\theta^{2})\right\}\label{eq:TVDformula},
\end{eqnarray}
where we assume $\theta\ll1$. Since the finite moments of a finite
gaussian matrix are bounded, the total variation distance between
$\mu_{\beta}(X)$ and $\nu_{\beta}(Z)$ obeys
\[
\left\Vert \mu_{\beta}(X)-\nu_{\beta}(Z(X))\right\Vert _{\text{TV}}\equiv\max_{A\subset\mathbb{R}}|\mu(A)-\nu(A)|\le O(\beta N^{2}\theta).
\]
To prove that $U$ and $U(\theta)$ that result from a QR decomposition have a TVD that is at most $O(\beta N^{2}\theta)$, we set up a coupling argument. Suppose $Y$ is distributed like $Z$ and is independent of $X$. We set up the coupling $(X,Y)$, whose marginals are just $X$ and $Y$. Applying the QR algorithm to the coupling results in the independent unitaries $(X',Y')$ and we need to argue that $TVD(X',Y')\le TVD(X,Y)\equiv \text{Pr}(X\ne Y)$. If we start with $X=Y$, the algorithm results in $X'=Y'$ by  the fact that $X$ and $Y$ are both with probability one invertible and the essential uniqueness of QR for invertible matrices. However, if $X\ne Y$ , the algorithm can result in $X' \ne Y'$ or $X' = Y'$. Therefore, $\text{Pr}(X'\ne Y')\le \text{Pr}(X\ne Y)=O(\beta N^{2}\theta)$. This proves that the TVD over the resulting unitaries is at most $O(\beta N^{2}\theta)$. 
\end{proof}
\begin{rem}
Intuitively, the output of any function of the random matrices $Z$
over the distribution $\nu_{\beta}(Z(X))$, maps to a distribution
whose distance is at most $O(\beta N^{2}\theta)$ from what would
have been obtained if the input were from the distribution $\mu_{\beta}(X)$.
Similarly, any algorithm whose input is from $\nu_{\beta}(Z)$, in
particular the distribution over $U(\theta)$ resulting from Gram-Schmidt
process, is also $O(\beta N^{2}\theta)$ close in distribution to
the space of unitary matrices from the Haar measure.
So we conclude
that the total variation distance between the distribution over $U(\theta)$ and the Haar
measure is $O(\beta N^{2}\theta)$.
\end{rem}
\section{\label{sec:RCS}Exact average-case $\#P$-hardness of random circuit
sampling (RCS)}
\subsection{Proof structure (informal)}
We now apply the foregoing mathematical results to a problem in quantum
computational complexity. It is known that there exist local circuits with $n$ qubits
whose probability amplitudes are $\#P$-Hard to estimate to within $1/\text{poly}(n)$
multiplicative error \cite{aaronson2011computational,bremner2011classical}. By a quantum circuit we have in mind a specific
architecture ${\cal A}$ that implements a specific unitary transformation
of $|0\rangle^{\otimes n}$. As stated in Subsection \eqref{sec:Summary_overview}
the unitary matrix $U$ that encodes the quantum computation is made
up of a product of unitary matrices with one unitary per layer of
the circuit. The unitary in each layer, in turn, is made up of a tensor
product of $1-$ or $2-$qubits unitaries (i.e., gates). Therefore,
if $C$ is the quantum circuit, it is a product of many gates $C=C_{m}C_{m-1}\dots C_{2}C_{1}$,
where each $C_{i}$ may be equal to a one qubit gate that implements
$\mathbb{I}\otimes U_{i}$ or a 2-qubit gate that implements $\mathbb{I}\otimes U_{ij}$.

\begin{defn}\label{RCS}
Below by random circuit sampling problem, or simply RCS, we mean sampling from the output distribution of a random circuit.
\end{defn}

The goal is to prove the following conjecture:
\begin{conjecture}\label{Conj1}
(Informal supremacy conjecture) Approximating to $1/\text{poly}(n)$
multiplicative error of most amplitudes of most quantum circuits is a \#P
hard problem.
\end{conjecture}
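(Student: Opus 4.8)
The plan is to run Lipton's worst-to-average-case reduction \cite{lipton1989new} with the continuous unitary interpolation of Section \eqref{sec:Unitary_Homotopy} in place of the Taylor-truncated, non-unitary deformation of \cite{bouland2018quantum}. Fix an architecture $\mathcal{A}$ on $n$ qubits, an output string $y$, and a worst-case circuit $C^{\star}$ of $\mathcal{A}$ whose amplitude $\mathrm{p}_y(C^{\star})$ of Eq. \eqref{eq:py} is $\#P$-hard to compute to within $1/\mathrm{poly}(n)$ multiplicative error \cite{aaronson2011computational,bremner2011classical}. For each local gate $g_i$ of $C^{\star}$ form the pencil $M_i(\theta)=(1-\theta)X_i+\theta g_i$ with $X_i$ an independent Gaussian matrix of matching size, let $U_i(\theta)$ be the unitary of its QR-factorization (Eq. \eqref{eq:MatrixInterpolation}), and set $C(\theta)=C_m(\theta)\cdots C_1(\theta)$, so that $C(1)=C^{\star}$. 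The first step, carried out via Algorithm \eqref{alg:ModifiedQR} and Corollary \eqref{cor:Polynomial}, is to show that $\mathrm{p}_y(C(\theta))=|\langle y|C(\theta)|0^n\rangle|^2$ is an honest rational function of the real variable $\theta$ of polynomially-bounded numerator and denominator degrees; the mechanism is that an output probability is a modulus-square, which clears the square-root normalizations of Eq. \eqref{eq:Entries_ModifiedQR} against the polynomial columns $z_k$, and most of the gate-level bookkeeping lies in making this precise for a product of gates where distinct columns of a given $U_i(\theta)$ appear along distinct computational paths.

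The second step localizes the interpolation: by the coupling argument of Lemma \eqref{lem:Theta_Haar} (in the variant with $g_i$ playing the role of $\mathbb{I}$ there, which only changes bounded moments), the law of each $U_i(\theta)$ is $O(\beta N^2\theta)$-close in TVD to Haar, so for $\theta$ in a tiny interval $(0,\delta)$ the circuit $C(\theta)$ is $\mathrm{poly}(n)\,\delta$-close in TVD to a Haar-generic circuit on $\mathcal{A}$; choosing $\mathrm{poly}(n)\,\delta$ inverse-superpolynomially small makes "average over the interpolated ensemble" indistinguishable, for the purposes of Conjecture \eqref{Conj1}, from "average over the Haar-generic ensemble." Then one argues by contradiction: if $\mathrm{p}_y(C)$ can be computed for \emph{most} circuits of $\mathcal{A}$, sample $n>k_1+k_2+2t$ distinct points $\theta_1,\dots,\theta_n\in(0,\delta)$ avoiding the measure-zero poles of the denominator; for a random draw of the $X_i$ and of the $\theta_i$, all but at most $t$ of the returned values $\mathrm{p}_y(C(\theta_i))$ are correct (this is where the meaning of "most" in Conjecture \eqref{Conj1} must be strong enough, or first amplified, so that the bad fraction among the samples stays below $1/2$), the rational Berlekamp--Welch algorithm (Algorithm \eqref{alg:(Berlekamp-Welch-for-Rational}, justified by Lemma \eqref{fact:Rational-function}) reconstructs $\mathrm{p}_y(C(\theta))$ exactly, and evaluation at $\theta=1$ recovers the $\#P$-hard value $\mathrm{p}_y(C^{\star})$ --- a contradiction. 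This delivers exact average-case $\#P$-hardness of $\mathrm{p}_y(\cdot)$ over the interpolated, hence near-Haar, ensemble.

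The \textbf{main obstacle} --- and the reason this programme stops at the exact statement --- is the passage to the genuinely approximate Conjecture \eqref{Conj1}, which permits a small multiplicative error in \emph{every} amplitude. The reduction above is robust only to \emph{adversarially placed} errors among evaluations that are otherwise exact; to tolerate a simultaneous perturbation of all the $f_i$ one needs a noise-stable variant of rational Berlekamp--Welch (whose behavior on approximate inputs is, as noted after Algorithm \eqref{alg:(Berlekamp-Welch-for-Rational}, not known) together with a quantitative bound that a $2^{-\mathrm{poly}(n)}$ additive error in each $\mathrm{p}_y(C(\theta_i))$ propagates to a controlled error in the extrapolated $\mathrm{p}_y(C^{\star})$ --- and this propagation can blow up because the denominator of $\mathrm{p}_y(C(\theta))$ can be tiny near a pole, so the extrapolation is ill-conditioned. (Anti-concentration of RCS output probabilities, available from \cite{Harrow_Saeed2018,Brandao2013}, is the complementary ingredient a full proof would combine with such a stable interpolation.) I therefore do not expect to close Conjecture \eqref{Conj1} here; the realistic deliverable is the exact result above, together with the structural observation that a robust rational-interpolation step would immediately upgrade it. A secondary, bookkeeping obstacle is to check that the common-denominator manipulations across the $m$ gates do not create a spurious common factor between numerator and denominator of $\mathrm{p}_y(C(\theta))$ that would violate the degree count, so the coprimality hypothesis of Lemma \eqref{fact:Rational-function} must be \emph{verified} for $\mathrm{p}_y(C(\theta))$ rather than assumed.
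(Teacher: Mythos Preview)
The statement you were asked to prove is Conjecture~\eqref{Conj1}, which the paper explicitly leaves \emph{open}: the paper's contribution is the exact average-case result (Theorem~\eqref{thm:MainResult}), and it states repeatedly that the approximate version remains unresolved for all supremacy proposals. There is therefore no ``paper's own proof'' of Conjecture~\eqref{Conj1} to compare against.

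You recognize this correctly. Your proposal does not claim to close Conjecture~\eqref{Conj1}; it lays out a plan for the exact average-case hardness and then explains why the approximate version is out of reach (no noise-stable rational Berlekamp--Welch, ill-conditioning of the extrapolation near poles of the denominator). That diagnosis matches the paper's own assessment in the remarks following Theorem~\eqref{thm:MainResult}.

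Your plan for the exact result is essentially the paper's proof of Theorem~\eqref{thm:MainResult}, with one cosmetic difference: you place the worst-case gate $g_i$ directly in the pencil $M_i(\theta)=(1-\theta)X_i+\theta g_i$, whereas the paper uses $(1-\theta)X+\theta\mathbb{I}$ and then left-multiplies by the worst-case gate $C_j$ (Definitions~\eqref{def:(-deformed-Haar)-Let} and~\eqref{def:C_theta}). Since $g_i^{\dagger}X_i$ is again standard complex Gaussian and QR commutes with left multiplication by a fixed unitary, the two constructions induce the same distribution on the local unitaries, so this difference is immaterial. The remaining steps---the rational-function structure of $\mathrm{p}_0(C(\theta))$ via Algorithm~\eqref{alg:ModifiedQR} and Corollary~\eqref{cor:Polynomial}, TVD closeness to Haar via Lemma~\eqref{lem:Theta_Haar}, the $3/4+1/\mathrm{poly}(n)$ counting that guarantees enough correct sample points, and recovery via the rational Berlekamp--Welch of Algorithm~\eqref{alg:(Berlekamp-Welch-for-Rational}---coincide with the paper's argument. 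Your secondary worry about coprimality of numerator and denominator is legitimate bookkeeping that the paper does not address explicitly either; it sidesteps the issue by working only with an \emph{upper} bound $(2mD_4,2mD_4)$ on the degrees, which is all that Algorithm~\eqref{alg:(Berlekamp-Welch-for-Rational} actually needs.
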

 
The statement we prove in this section informally reads: 
\begin{thm}
(Informal statement of the theorem) Exact calculation of most amplitudes
of most quantum circuits is a $\#P$-Hard problem.
\end{thm}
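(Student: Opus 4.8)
The plan is to run a Lipton-style worst-case-to-average-case reduction in which the QR-interpolation of Eq.~\eqref{eq:MatrixInterpolation} replaces the usual polynomial deformation of a hard instance, and the rational-function Berlekamp--Welch algorithm (Alg.~\ref{alg:(Berlekamp-Welch-for-Rational}) replaces polynomial extrapolation. First I would fix an architecture $\cal A$ on $n$ qubits together with a circuit $C^{\star}=C^{\star}_{m}\cdots C^{\star}_{1}$ realizable in $\cal A$ (one per input size) whose amplitude $\mathrm{p}_{y}(C^{\star})$ is $\#P$-Hard to compute exactly --- such a $C^{\star}$ is a restatement of the worst-case hardness of \cite{aaronson2011computational,bremner2011classical}, and $y$ can be fixed to $0^{n}$ by absorbing a final gate. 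For each local gate $C^{\star}_{i}$ I would draw an independent Gaussian matrix $X_{i}$ of the matching ($2\times2$ or $4\times4$) size and let $C_{i}(\theta)$ be the unitary produced by the modified QR-algorithm (Alg.~\ref{alg:ModifiedQR}, normalized as in Eq.~\eqref{eq:Entries_ModifiedQR}) on the pencil $(1-\theta)X_{i}+\theta\,C^{\star}_{i}$, and set $C(\theta)=C_{m}(\theta)\cdots C_{1}(\theta)$. Then $C(\theta)$ is a legal circuit of $\cal A$ for every $\theta$, $C(1)=C^{\star}$ (the QR-factor of a unitary is itself under the positive-diagonal convention), $C_{i}(0)$ is Haar distributed, and by Lemma~\ref{lem:Theta_Haar} the gates of $C(\theta)$ are $O(\theta)$-close in TVD to Haar for small $\theta$, so $C(\theta)$ is $O(m\theta)$-close to a fully generic circuit of $\cal A$.

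\textbf{Degree bookkeeping.} Next I would track the functional form of $\mathrm{p}_{y}(C(\theta))$. By Corollary~\ref{cor:Polynomial} (with $D_{4}+1=28$ for the $4\times4$ gates) and Eq.~\eqref{eq:Entries_ModifiedQR}, every entry of $C_{i}(\theta)$ is a polynomial in $\theta$ of degree $O(1)$ over the square root of a polynomial of degree $O(1)$. Expanding $\langle y|C(\theta)|0^{n}\rangle$ over computational paths shows it equals ${\cal P}(\theta)/\sqrt{Q(\theta)}$ with ${\cal P},Q$ polynomials of degree $O(m)$, so by the final remark of Alg.~\ref{alg:ModifiedQR} the probability $\mathrm{p}_{y}(C(\theta))=\bigl|\langle y|C(\theta)|0^{n}\rangle\bigr|^{2}=|{\cal P}(\theta)|^{2}/Q(\theta)$ is \emph{exactly} a rational function of $\theta$ of degree $(k_{1},k_{2})$ with $k_{1},k_{2}=\mathrm{poly}(n)$, and $Q(1)\ne0$ so $\theta=1$ is not a pole. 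No truncation is used and $C(\theta)$ never leaves $\mathbb{U}(2^{n})$; this is where the explicit degree bounds of Lemma~\ref{Lem:RationalFunction} and Corollary~\ref{cor:Polynomial} pay off and is the advantage over \cite{bouland2018quantum}.

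\textbf{The reduction.} Suppose, for contradiction, that exact computation of $\mathrm{p}_{y}(C)$ were \emph{not} $\#P$-Hard on average: then there would be an efficient randomized algorithm $\cal O$ outputting $\mathrm{p}_{y}(C)$ correctly on at least a $1-\epsilon$ fraction of circuits of $\cal A$ with Haar-random gates, for a sufficiently small constant (or $1/\mathrm{poly}(n)$) $\epsilon$. I would fix $n'=\mathrm{poly}(n)$ with $n'>k_{1}+k_{2}+2t$ and $t=\lfloor(n'-k_{1}-k_{2})/2\rfloor$, and pick sample points $\theta_{1},\dots,\theta_{n'}$ of order $O(\theta)$ with $\theta$ a small inverse polynomial, avoiding the at most $\deg Q$ roots of $Q$. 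For each $j$, $\cal O$ errs on $C(\theta_{j})$ with probability at most $\epsilon+O(m\theta)$, so the expected number of erroneous samples is at most $n'\bigl(\epsilon+O(m\theta)\bigr)$, which our choice of $\epsilon,\theta$ makes $\ll t$; by Markov the actual count is $\le t$ with probability $\ge\tfrac12$. On that event, feeding $\{(\theta_{j},{\cal O}(C(\theta_{j})))\}$ into Alg.~\ref{alg:(Berlekamp-Welch-for-Rational} recovers $\mathrm{p}_{y}(C(\theta))$ exactly despite the $\le t$ errors, and evaluating this rational function at $\theta=1$ yields $\mathrm{p}_{y}(C^{\star})$ --- a $\#P$-Hard quantity computed in randomized polynomial time with $O(n')$ calls to $\cal O$, which is impossible unless the polynomial hierarchy collapses. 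Unwinding the quantifiers then gives that exact computation of most amplitudes of most circuits is $\#P$-Hard.

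\textbf{Main obstacle.} The hard part will be the joint calibration inside the reduction: I need $(k_{1},k_{2})$ to stay polynomial so that only $\mathrm{poly}(n)$ samples suffice (ensured by Corollary~\ref{cor:Polynomial}), the $\theta_{j}$ small enough that the TVD slack $O(m\theta)$ of Lemma~\ref{lem:Theta_Haar} stays well inside the Berlekamp--Welch error budget, yet bounded away from the poles of $\mathrm{p}_{y}(C(\theta))$ and from $\theta=1$, and the average-case hypothesis strong enough that $n'(\epsilon+O(m\theta))<(n'-k_{1}-k_{2})/2$. This is precisely why the argument reaches only the \emph{exact} cell of the hardness table: promoting the same skeleton to Conjecture~\ref{Conj1} would require a robust version of Alg.~\ref{alg:(Berlekamp-Welch-for-Rational} tolerating \emph{approximate} evaluations $f_{i}$, which the remarks after that algorithm flag as open.
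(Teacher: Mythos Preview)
Your proposal is correct and follows essentially the same route as the paper's proof of Theorem~\ref{thm:MainResult}: QR-based unitary interpolation giving a low-degree rational $\mathrm{p}_0(C(\theta))$, the rational Berlekamp--Welch algorithm (Alg.~\ref{alg:(Berlekamp-Welch-for-Rational}) to recover it from mostly-correct samples, and a Lipton-style worst-to-average reduction with the same $3/4+1/\mathrm{poly}(n)$ threshold emerging from your Markov calibration. The only cosmetic difference is that the paper interpolates via $G_j(\theta)R_j(\theta)=(1-\theta)X+\theta\mathbb{I}$ and then sets $C_j(\theta)=C_jG_j(\theta)$ (Defs.~\ref{def:(-deformed-Haar)-Let}--\ref{def:C_theta}), whereas you place $C_i^\star$ directly in the pencil; these are equivalent by the commutation of QR with left unitary multiplication and the unitary invariance of the Gaussian ensemble, both established in Section~\ref{sec:Close_to_Haar}, so your invocation of Lemma~\ref{lem:Theta_Haar} goes through after that one-line identification.
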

 
The key words that distinguish our results from the stated conjecture
are 'approximating' vs. 'exact calculation'. 

The proof structure is as follows. We start with an arbitrary circuit, which may be the worst case circuit
$C$, and whose exact specification is irrelevant for our purposes. We just assume
that it has an architecture ${\cal A}$ and as usual is made up of
local gates.

Our goal is to show that if the local gates were implementing local unitaries
independently drawn from the Haar measure, then the problem remains
 $\#P$-Hard to compute on average over this distribution. The way we do this is that we $\theta-$deform the worst
case circuit $C$ to be sufficiently close to such a random circuit
denoted by $C(\theta)$ such that $C(1)=C$. The proposed deformation
is rational function valued in $\theta$ for the quantity whose $\#P$-Hardness
we want to show. But in the previous section we provided an efficient
BW algorithm that can efficiently interpolate rational functions.
Therefore, we can invoke a Lipton-like argument (see italic text
in Section \eqref{sec:Summary_overview}) to prove that the average
case should also be $\#P$-Hard, because otherwise $C(1)$ would be easy
to compute by interpolation. 
\subsection{Formal results} 
By inserting a complete set of basis between each $C_{i}$ and $C_{i+1}$,
one can write the circuit down in what is at times called ``Feynman
path integral'' form \cite{bouland2018quantum}. The amplitude corresponding
to the initial state $|y_{0}\rangle$ and final state $|y_{m}\rangle$ is
\begin{equation}
\langle y_{m}|C|y_{0}\rangle=\sum_{y_{1},y_{2},\dots,y_{m-1}\in\{0,1\}^{n}}\langle y_{m}|C_{m}|y_{m-1}\rangle\langle y_{m-1}|C_{m-1}|y_{m-2}\rangle\cdots\langle y_{1}|C_{1}|y_{0}\rangle.\label{eq:Feynman}
\end{equation}

We now define some basic notation following \cite{bouland2018quantum}.
\begin{defn}
(Haar random circuit distribution) Let ${\cal A}$ be an architecture
over circuits and let ${\cal H}_{{\cal A}}$ be the distribution over
circuits in ${\cal A}$ whose gates are independently drawn from the
Haar measure. 
\end{defn}
The random circuit sampling is then:
\begin{defn}
(Random Circuit Sampling (RCS) \cite{bouland2018quantum}) Random
circuit sampling over a fixed architecture ${\cal A}$ is the the
following task: given a description of a random circuit $C$ from
${\cal H_{A}}$, and a description of an error parameter $\epsilon>0$,
sample from the probability distribution induced by $C$. That is
draw $y\in\{0,1\}^{n}$ with probability $\text{Pr}(y)=|\langle y|C|0\rangle|^{2}$
up to a total variation distance $\epsilon$ in time $\text{poly}(n,1/\epsilon)$.
\end{defn}
In RCS one seeks estimations of $|\langle y|C|0\rangle|^{2}$ but
any bit string $|y\rangle$ is simple to obtain by applying Pauli
$X$ matrices to positions in $|0^{n}\rangle$ that correspond to
1's. By this, so called 'hiding property' \cite{aaronson2011computational},
it is sufficient to prove the hardness of computing
\begin{equation}
\text{p}_{0}(C)\equiv|\langle0^{n}|C|0^{n}\rangle|^{2}.\label{eq:p0}
\end{equation}
\begin{conjecture}\label{QSupremacyConj}
(Quantum Supremacy Conjecture \cite{bouland2018quantum,aaronson2011computational})
There is no classical randomized algorithm that performs RCS to inverse polynomial total variation distance error. 
\end{conjecture}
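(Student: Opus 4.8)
\emph{Proposal.} The plan is to prove Conjecture \ref{QSupremacyConj} by contradiction via the polynomial hierarchy, in the three-move pattern of Aaronson--Arkhipov and Bouland \emph{et al.}, but with the algebraic engine of Section \ref{sec:Unitary_Homotopy} carrying out the worst-to-average-case reduction. Suppose $\mathcal{S}$ is a $\mathrm{poly}(n,1/\epsilon)$-time classical randomized sampler that, for a $1-1/\mathrm{poly}(n)$ fraction of circuits $C\sim\mathcal{H}_{\mathcal{A}}$, outputs $y$ with probability $q_y$ satisfying $\sum_y|q_y-\mathrm{p}_y(C)|\le\epsilon$. Since $\mathcal{S}$ is a fixed polynomial-size randomized computation, Stockmeyer's approximate counting theorem places the numbers $q_y$ inside $\mathrm{BPP}^{\mathrm{NP}}$ (function version) up to a multiplicative $1+1/\mathrm{poly}(n)$ factor. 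Feeding in the anti-concentration bounds proven for RCS \cite{Harrow_Saeed2018,Brandao2013} together with Markov's inequality over uniformly random $y$, for a $1-1/\mathrm{poly}(n)$ fraction of pairs $(C,y)$ one gets $\mathrm{p}_y(C)\ge 1/(\mathrm{poly}(n)\,2^n)$ and hence $|q_y-\mathrm{p}_y(C)|\le \mathrm{p}_y(C)/\mathrm{poly}(n)$; by the hiding property it suffices to look at $y=0^n$. Composing these, approximate average-case estimation of $\mathrm{p}_0(C)$ to $1/\mathrm{poly}(n)$ multiplicative error --- exactly Conjecture \ref{Conj1} --- would lie in $\mathrm{BPP}^{\mathrm{NP}}\subseteq\Sigma_3^{\mathrm{p}}$. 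If Conjecture \ref{Conj1} is $\#P$-hard, Toda's theorem then yields $\mathrm{PH}\subseteq\mathrm{P}^{\#P}\subseteq\mathrm{P}^{\Sigma_3^{\mathrm{p}}}$, collapsing the hierarchy --- the desired contradiction, whence no such $\mathcal{S}$ exists.

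The entire difficulty is thus concentrated in Conjecture \ref{Conj1}, the \emph{approximate} average-case $\#P$-hardness, which the exact result of this paper stops short of; here is how I would push the present machinery toward it. Starting from a worst-case hard circuit $C=C(1)$, deform it along $C(\theta)$ by replacing each local gate with the modified-QR path of Eq. \eqref{eq:QR}, so that by Corollary \ref{cor:Polynomial} applied to every gate and then to the Feynman-sum expansion \eqref{eq:Feynman}, the quantity $\mathrm{p}_0(C(\theta))$ is an honest rational function of $\theta$ of degree $D=\mathrm{poly}(n)$, with \emph{unitary} gates everywhere along the path and --- by Lemma \ref{lem:Theta_Haar} --- with $C(\theta)$ at small $\theta$ being $O(\theta)$-close in TVD to $\mathcal{H}_{\mathcal{A}}$. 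An approximate average-case oracle, queried at $m>D+2t$ points $\theta_1,\dots,\theta_m$ drawn from a small-$\theta$ window, returns values $\tilde{p}_i=\mathrm{p}_0(C(\theta_i))(1\pm 1/\mathrm{poly}(n))$ except on at most $t$ of them. One would combine the rational Berlekamp--Welch algorithm (Alg. \ref{alg:(Berlekamp-Welch-for-Rational}), which already absorbs the $t$ gross errors) with a \emph{stability} analysis controlling how the residual small multiplicative noise propagates when the recovered rational function is extrapolated to $\theta=1$.

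The main obstacle is precisely this amplification. Extrapolating a degree-$D$ rational function from a cluster of points in $[0,\delta]$ out to $\theta=1$ generically multiplies the input error by a factor that, by Chebyshev/Paturi-type extremal bounds, scales like $\exp\!\big(\Omega(D\log(1/\delta))\big)$ --- exponential in $D=\mathrm{poly}(n)$ --- which swamps the $1/\mathrm{poly}(n)$ error budget. Overcoming it appears to require one of: (i) a deformation whose relevant degree is only $O(\log n)$ or constant, making the blow-up polynomial, plausibly by interpolating only a sublogarithmic number of gates at a time or by a path cleverer than Eq. \eqref{eq:QR} that keeps the Feynman sum low-degree; (ii) spreading the sample points over a constant fraction of $[0,1]$ while still certifying that $C(\theta)$ there is close enough to $\mathcal{H}_{\mathcal{A}}$ for the average-case oracle to apply --- trading error blow-up for a harder closeness estimate; or (iii) replacing interpolation entirely by a noise-robust self-correction scheme tuned to the explicit coefficients made available by Corollary \ref{cor:Polynomial}. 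Absent one of these the method yields only the exact case; with any of them the chain of the first paragraph closes and Conjecture \ref{QSupremacyConj} follows. I would attack (i) first, since the explicit degree bookkeeping of Lemma \ref{Lem:RationalFunction} is exactly the handle needed to see where the $\mathrm{poly}(n)$ in $D$ originates and whether a restricted deformation can drive it down to $O(\log n)$.
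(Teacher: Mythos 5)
The statement you were handed is a \emph{conjecture}, and the paper does not prove it. The paper is explicit about this at several points: the abstract says ``(Dis)Proving the quantum supremacy conjecture requires \textit{approximate} average case hardness; this remains an open problem for all quantum supremacy proposals,'' the table in Section \ref{sec:Summary_overview} places a ``?'' in the approximate/average-case cell, and the conclusion states ``To prove the quantum supremacy conjecture the robustness needs to be improved.'' What the paper actually proves is Theorem \ref{thm:MainResult} (exact average-case $\#P$-hardness via the modified-QR path plus rational-function Berlekamp--Welch) together with the corollary that this exact hardness is \emph{necessary} for Conjecture \ref{QSupremacyConj} --- not sufficient. So there is no ``paper's own proof'' for this statement to compare against.

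That said, your writeup is a faithful assessment of the state of affairs rather than a false proof, and it correctly reproduces the architecture of how such a proof would have to go: Stockmeyer approximate counting places the sampler's estimates in $\mathrm{BPP}^{\mathrm{NP}}$, anti-concentration plus hiding reduces to estimating $\mathrm{p}_0(C)$ to inverse-polynomial multiplicative error, and Toda's theorem collapses $\mathrm{PH}$ if that estimation were $\#P$-hard. You then correctly locate the entire unsolved content in Conjecture \ref{Conj1}, and the obstruction you isolate --- extrapolating a degree-$\mathrm{poly}(n)$ rational function from a cluster in $[0,\delta]$ out to $\theta=1$ amplifies multiplicative noise by a factor exponential in the degree (Paturi/Chebyshev-type bounds) --- is precisely the obstacle discussed in \cite{aaronson2011computational} and \cite{bouland2018quantum} and left open here. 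Your three escape routes (lower the effective degree, spread sample points while re-proving TVD closeness, or replace interpolation by an explicit self-correction) are sensible speculation, but as you yourself admit, none of them is carried out, so nothing is proved. The one thing to flag: a reviewer asked to verify a ``proof'' of this statement should report that no proof exists in the paper, and your proposal does not supply one; it is, however, a correct diagnosis of why.
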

\begin{rem}Conjecture \eqref{Conj1} implies Conjecture \eqref{QSupremacyConj} assuming Polynomial Hierarchy is infinite \cite{aaronson2011computational}.\end{rem}

In order to ultimately prove
this, some intermediate steps have been taken. It is known that the
estimation of the exact amplitudes of the worst case
circuit are $\#P$-Hard. More recently, Bouland et al proved that
the \textit{exact} average case RCS is also $\#P$-Hard \cite{bouland2018quantum}.
What remains open is to prove that the \textit{approximate average}
RCS is also $\#P$-Hard. This is what is meant by proving ``robustness''.

An application of the developments of the previous sections is a new
proof of the exact \#P-hardness of RCS, which among other things is
free of some issues presented in \cite{bouland2018quantum}. For
example, the truncation of the Taylor series is unsatisfactory from
a mathematical perspective. In particular, BW requires exact evaluation
of the polynomial at some of the many points, but the truncation introduces
errors at all points. One then has to do extra work to justify the
validity of the proof.
\begin{figure}
\begin{centering}
\includegraphics[scale=0.35]{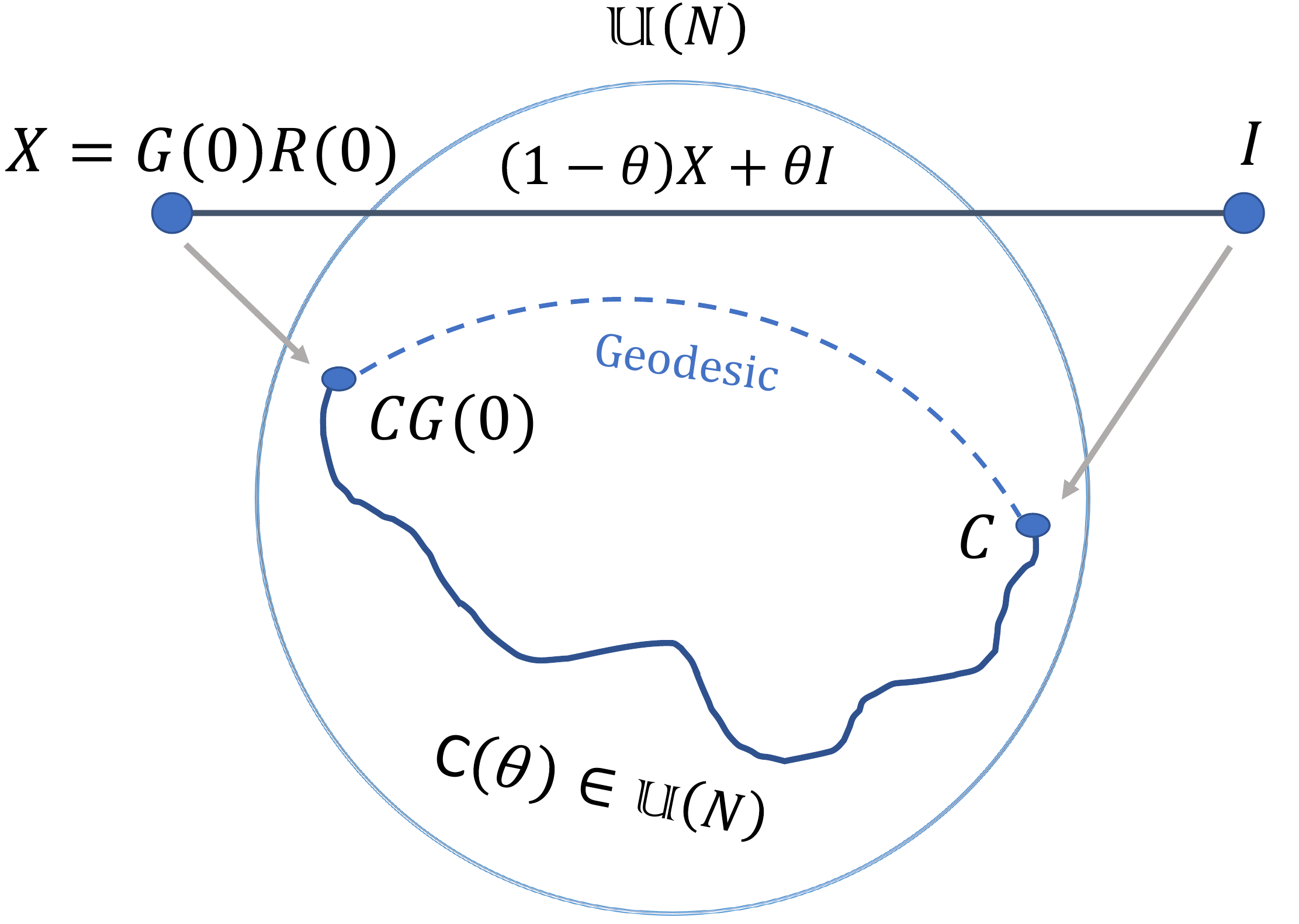}\caption{\label{fig:UnitaryG} Schematics of $C(\theta)\in\mathbb{U}(N)$ in Def. \eqref{def:C_theta}}
\par\end{centering}
\end{figure}
Therefore, we now proceed with the construction of the new reduction
of the exact-to-average case hardness based on constructing a path
on the unitary group described above.
\begin{defn}
\label{def:(-deformed-Haar)-Let}($\theta$-deformed Haar) Let ${\cal A}$
be an architecture over circuits, $\theta\in[0,1]$, and $G_{1},\dots,G_{m}$
be the gates in the architecture. Define the distribution ${\cal H}_{{\cal A},\theta}$
over circuits in ${\cal A}$ by setting each gate in the circuit to
be the unitary defined by the QR decomposition $G_{j}(\theta)R_{j}(\theta)=(1-\theta)X+\theta\mathbb{I}$,
where $X$ is a standard complex gaussian matrix with independent entries
$x_{ij}\sim{\cal N}_{\mathbb{C}}(0,1)$.
\end{defn}
Comment: $G_{j}(0)$ is a (small $2\times 2$ or $4 \times 4$) unitary distributed according to the Haar
measure (Sec. \eqref{sec:Close_to_Haar}) and $G_{j}(1)=\mathbb{I}$.
This interpolation is exact and no truncations are needed.
\begin{defn}
\label{def:C_theta}To reduce the complexity of the worst
case circuit $C=C_{m}C_{m-1}\dots C_{2}C_{1}$ to the average case, 
denote the latter by $C(\theta)=C_{m}(\theta)C_{m-1}(\theta)\dots C_{2}(\theta)C_{1}(\theta)$,
in which the $\theta-$deformed Haar gates are defined by $C_{j}(\theta)\equiv C_{j}G_{j}(\theta)$
and $G_{j}(\theta)$ is defined in Def. \eqref{def:(-deformed-Haar)-Let}.
Clearly (see Fig. \eqref{fig:Ctheta}),
\begin{eqnarray*}\label{Def:G}
\theta=0 & : & C_{j}G(0)\implies C(0)\in{\cal H}_{{\cal A}}\\
\theta=1 & : & C_{j}G(1)=C_{j}(1)=C_{j}\mathbb{I}\implies C(1)=C\quad\text{worst case circuit},
\end{eqnarray*}
the first relation follows from the translation invariance property
of the Haar measure that was discussed in Sec. \eqref{sec:Close_to_Haar} (see Fig. \eqref{fig:UnitaryG}).
Let us define the deformation of Eq. \eqref{eq:p0} by 
\begin{equation}
\text{p}_{0}(C(\theta))\equiv|\langle0^{n}|C(\theta)|0^{n}\rangle|^{2},\label{eq:p0_theta}
\end{equation}
which gives $\text{p}_{0}(C(1))=\text{p}_{0}(C)\equiv|\langle0^{n}|C|0^{n}\rangle|^{2}$. %
\textcolor{red}{}%
\end{defn}

\begin{figure}
\begin{centering}
\includegraphics[scale=0.5]{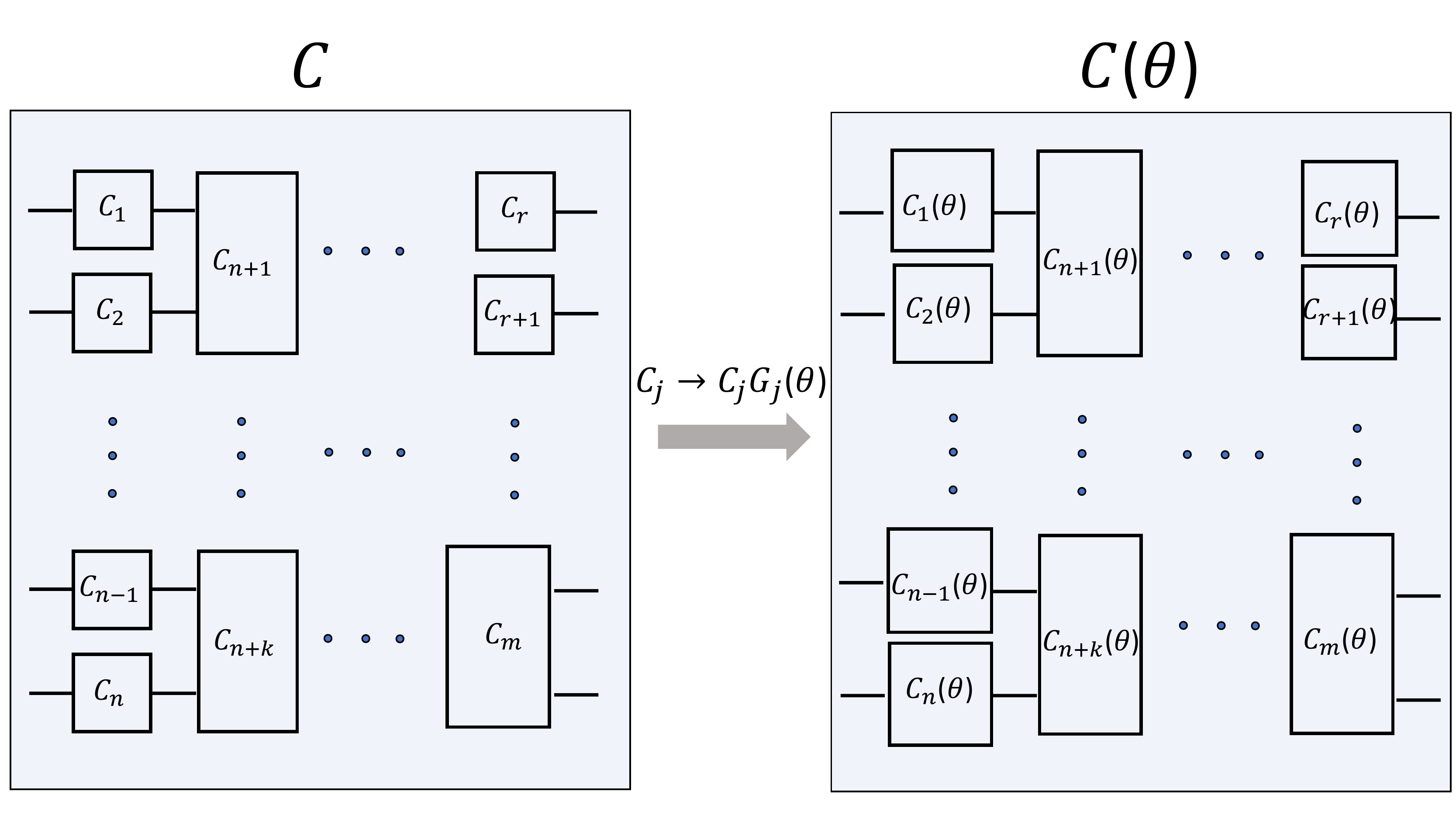}\caption{\label{fig:Ctheta}Schematics for Def. \eqref{def:C_theta}: The scrambling of the circuit $C$ to $C(\theta)$.}
\par\end{centering}
\end{figure}

We are now in a position to prove that the total variation distance
between circuits whose local gates are Haar random unitaries and $C(\theta)=C_{1}G_{1}(\theta)\text{ }C_{2}G_{2}(\theta)\cdots C_{m}G_{m}(\theta)$
is small. We emphasize that the scrambling takes place locally, which preserves the underlying architecture $\cal{A}$.
\begin{lem}
Let ${\cal A}$ be an architecture on circuits with $m$ gates. Then
$C.{\cal H}_{{\cal A}}$ is distributed according to ${\cal H_{A}}$,
and the total variation distance between $C\cdot{\cal H}_{{\cal A}}$
and $C\cdot{\cal H}_{{\cal A},\theta}$ is $O(\theta)$ for $\theta\ll1$.  
\end{lem}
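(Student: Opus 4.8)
The plan is to establish the two claims of the lemma separately, relying on the translation invariance of the Haar measure and on Lemma~\ref{lem:Theta_Haar} together with a coupling argument. First I would address the distributional identity $C\cdot\mathcal{H}_{\mathcal{A}}\stackrel{d}{=}\mathcal{H}_{\mathcal{A}}$. Recall that $C=C_mC_{m-1}\cdots C_1$ is a fixed circuit respecting the architecture $\mathcal{A}$, and a sample from $C\cdot\mathcal{H}_{\mathcal{A}}$ has local gates $C_jH_j$ with each $H_j$ independently Haar. Since the Haar measure on each local unitary group $\mathbb{U}(2)$ or $\mathbb{U}(4)$ is left-translation invariant, $C_jH_j\stackrel{d}{=}H_j$ for each $j$, and independence across $j$ is preserved; hence the full tuple of local gates of $C\cdot\mathcal{H}_{\mathcal{A}}$ has the same joint law as that of $\mathcal{H}_{\mathcal{A}}$. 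This is the easy half.

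Next I would prove the total variation bound between $C\cdot\mathcal{H}_{\mathcal{A}}$ and $C\cdot\mathcal{H}_{\mathcal{A},\theta}$. The key observation is that left-multiplying every local gate by the fixed deterministic unitary $C_j$ is a bijection on each local unitary group, so it cannot increase total variation distance: if $\mu,\nu$ are two laws on $\mathbb{U}(d)$ and $\phi(U)=C_jU$, then $\|\phi_*\mu-\phi_*\nu\|_{\mathrm{TV}}=\|\mu-\nu\|_{\mathrm{TV}}$. Since the circuit-building map $(V_1,\dots,V_m)\mapsto (V_1,\dots,V_m)$ (viewed as the description of the circuit) is applied to independent coordinates, and the deformation $\mathcal{H}_{\mathcal{A},\theta}$ differs from $\mathcal{H}_{\mathcal{A}}$ only gate-by-gate, I would bound the TVD between the two joint distributions by the sum of the $m$ per-gate TVDs via the standard subadditivity of total variation over independent product measures. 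Each per-gate term is exactly the quantity controlled by Lemma~\ref{lem:Theta_Haar}: the law of $G_j(\theta)$ obtained from the QR-decomposition of $(1-\theta)X+\theta\mathbb{I}$ is $O(\beta d^2\theta)=O(\theta)$-close in TVD to the Haar measure on $\mathbb{U}(d)$, with $d\in\{2,4\}$ a constant. Multiplying through by the fixed $C_j$ leaves this distance unchanged, so summing over the $m$ gates gives $\|C\cdot\mathcal{H}_{\mathcal{A}}-C\cdot\mathcal{H}_{\mathcal{A},\theta}\|_{\mathrm{TV}}=O(m\theta)=O(\theta)$ for $\theta\ll1$ with $m=\mathrm{poly}(n)$ absorbed into the constant at the relevant scale, or stated as $O(m\theta)$ if one wants to track $m$ explicitly.

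The main obstacle, and the step that needs the most care, is making the ``TVD is subadditive over the product structure'' argument rigorous in the presence of the left-multiplications and the fact that $C(\theta)$ is a \emph{product} of the deformed gates rather than a tuple: one must be careful that the quantity whose law we compare is the circuit \emph{description} (the ordered list of local gates), not the resulting unitary $C(\theta)$ as a single matrix, since the multiplication map could in principle collapse distinct descriptions. I would therefore phrase everything at the level of the joint law of $(C_1G_1(\theta),\dots,C_mG_m(\theta))$ versus $(C_1H_1,\dots,C_mH_m)$, invoke the hybrid/telescoping argument that swaps one coordinate at a time from Haar to $\theta$-deformed, bound each swap by Lemma~\ref{lem:Theta_Haar}, and only at the very end push forward through the (deterministic) circuit-composition map, which can only decrease TVD. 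A coupling presentation — couple $H_j$ with $G_j(\theta)$ optimally so they agree with probability $1-O(\theta)$, as in the proof of Lemma~\ref{lem:Theta_Haar}, couple independently across $j$, and note the circuits agree unless some coordinate disagrees — gives the cleanest writeup and immediately yields $\mathrm{Pr}[\text{circuits differ}]\le\sum_{j=1}^m O(\theta)=O(m\theta)$.
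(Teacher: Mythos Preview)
Your proposal is correct and follows essentially the same approach as the paper: translation invariance of Haar gives the first claim, unitary left-multiplication preserves TVD so it suffices to compare $\mathcal{H}_{\mathcal{A}}$ with $\mathcal{H}_{\mathcal{A},\theta}$ gate-by-gate via Lemma~\ref{lem:Theta_Haar}, and additivity over the $m$ independent gates yields $O(m\theta)$. Your writeup is in fact more careful than the paper's---you explicitly justify the product/subadditivity step via a hybrid or coupling argument and distinguish the circuit description from the composed unitary---but the underlying strategy is identical.
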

\begin{proof}
Recall that Haar measure implies translational invariance with respect
to multiplications by fixed unitaries (see Subsection \eqref{sec:Close_to_Haar}).
Therefore, if $H_{j}$ is distributed according to the Haar measure
then so is $C_{j}H_{j}$. Moreover the $\ell_{1}$ norm that defines
total variation distance is invariant under unitary multiplication.
So it suffices to compare the measures over ${\cal H}_{{\cal A},\theta}$
and ${\cal H}_{{\cal A}}$, which by Lemma \eqref{lem:Theta_Haar} have
TVD of $O(\theta)$ over a single local gate. By the additivity of
TVD, the distribution induced by $C(\theta)$ which is denoted by
${\cal H}_{{\cal A},\theta}$ has a TVD from ${\cal H}_{{\cal A}}$
that is $O(m\theta)$. 
\end{proof}
 \begin{rem} 
We prove $O(\theta)$ TVD without requiring a priori that
$\theta=1/\text{poly}(n)$. In an $n-$ qubit circuit, $m=\text{poly}(n)$
and is often $m=O(n^{2})$. Therefore, if we take $\theta=1/\text{poly}(n)$
we are guaranteed for the TVD between $C\cdot{\cal H}_{{\cal A}}$
and $C(\theta)\in C\cdot{\cal H}_{{\cal A},\theta}$ to be $O(1/\text{poly}(n))$. 
\end{rem}
\begin{thm} 
\label{thm:MainResult}Let ${\cal A}$ be an architecture such that
$\text{p}_{0}(C)$ is $\#P-$Hard to calculate in the worst case.
Then it is $\#P$-hard to compute $3/4+1/\text{poly}(n)$ of the probabilities over ${\cal H}_{\cal A}$.\end{thm}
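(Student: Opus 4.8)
This is a Lipton-style worst-case-to-average-case reduction, assembled from the $\theta$-deformation of Definition~\ref{def:C_theta}, the degree bounds of Corollary~\ref{cor:Polynomial}, and the rational Berlekamp--Welch interpolation of Algorithm~\ref{alg:(Berlekamp-Welch-for-Rational}. Fix an architecture $\mathcal{A}$ for which $\text{p}_0(C)$ is $\#P$-hard in the worst case and a hard circuit $C=C_m\cdots C_1$; for independent complex gaussian matrices $X_1,\dots,X_m$ form $C(\theta)=C_1G_1(\theta)\cdots C_mG_m(\theta)$ as in Definition~\ref{def:C_theta}, so that $C(1)=C$ and, by the preceding lemmas, $C(\theta)$ is distributed as $C\cdot\mathcal{H}_{\mathcal{A},\theta}$, which is within total variation distance $O(m\theta)$ of $\mathcal{H}_{\mathcal{A}}$ and coincides with it at $\theta=0$. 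The structural lemma I would establish first is that, with the $X_j$ fixed, the map $\theta\mapsto f(\theta):=\text{p}_0(C(\theta))=|\langle 0^n|C(\theta)|0^n\rangle|^2$ is a rational function of degree $(k_1,k_2)$ with $k_1+k_2=O(m)=\mathrm{poly}(n)$: by Corollary~\ref{cor:Polynomial} the $k$-th column of each $G_j(\theta)$ is $z^{(j)}_k(\theta)/\sqrt{\langle z^{(j)}_k(\theta),z^{(j)}_k(\theta)\rangle}$ with $z^{(j)}_k$ polynomial of degree $\le 3^{k-1}\le 27$ (the gates are $2\times2$ or $4\times4$) and the denominator polynomial of degree $\le 54$, and substituting these into the Feynman-path expansion~\eqref{eq:Feynman} of $\langle 0^n|C(\theta)|0^n\rangle$ and forming $|\cdot|^2$ one must check that the square-root normalizations coming from the modified QR combine, across the $m$ gates, into a single polynomial denominator, leaving a genuine rational $f$ whose degree is then bounded. \emph{This verification is the step I expect to be the real obstacle: it is precisely where the modified-QR path (Algorithm~\ref{alg:ModifiedQR}) is needed instead of the geodesic, and where the machinery of Section~\ref{sec:Unitary_Homotopy} is spent; everything afterwards is routine.}

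Granting the structural lemma, suppose toward a contradiction that an oracle $\mathcal{O}$ (after standard error reduction) returns $\text{p}_0(C')$ correctly for at least a $3/4+1/\mathrm{poly}(n)$ fraction of $C'\sim\mathcal{H}_{\mathcal{A}}$; write $\eta=1/4-1/\mathrm{poly}(n)$ for the failure fraction. Choose $N=\mathrm{poly}(n)$ distinct points $\theta_1,\dots,\theta_N\in(0,\delta]$ with $\delta$ small enough that $O(m\delta)<1/\mathrm{poly}(n)$, and query $\mathcal{O}$ on the $N$ circuits $C(\theta_i)$ --- all built from the same $X_j$'s, hence all evaluations of the single rational function $f$. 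Since $C(\theta_i)$ is within $O(m\theta_i)\le O(m\delta)$ of $\mathcal{H}_{\mathcal{A}}$, for each $i$ we get $\Pr_X[\mathcal{O}(C(\theta_i))\ne f(\theta_i)]\le\eta':=\eta+O(m\delta)<1/4$; letting $W$ be the number of $i\in[N]$ on which $\mathcal{O}$ errs, linearity of expectation gives $\mathbb{E}_X[W]\le\eta'N$, so by Markov $\Pr_X[W\ge t]\le\eta'N/t$. Take $t=\lceil 2\eta'N\rceil$, making this $\le 1/2$; since $\eta'<1/4$ one may choose $N=\mathrm{poly}(n)$ large enough that $N>k_1+k_2+2t$. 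Then with probability $\ge 1/2$ over $X$ at most $t$ of the queried values are wrong, so Algorithm~\ref{alg:(Berlekamp-Welch-for-Rational} recovers $f$ exactly, and evaluating the recovered function at $\theta=1$ outputs $f(1)=\text{p}_0(C(1))=\text{p}_0(C)$.

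Repeating over $O(\log)$ fresh draws of the $X_j$'s and taking a majority boosts the success probability to $1-o(1)$; the result is a randomized polynomial-time Turing reduction from the worst-case problem ``compute $\text{p}_0(C)$'' to the stated average-case task, so the latter is $\#P$-hard. Two remarks on the constants: the value $3/4$ is forced by the Markov step, since the $N$ error events all depend on the same gaussian matrices $X_j$ and need not be independent, leaving only a first-moment bound --- a failure fraction strictly below $1/4$ is exactly what permits $N>k_1+k_2+2t$ with $t$ a constant fraction of $N$; and the $O(m\theta_i)$ drift of $\mathcal{H}_{\mathcal{A},\theta_i}$ from $\mathcal{H}_{\mathcal{A}}$ is absorbed into $\eta'$ and is harmless as long as each $\theta_i=1/\mathrm{poly}(n)$, which also keeps every $C(\theta_i)$ --- and hence the entire reduction --- inside the unitary group with no truncation.
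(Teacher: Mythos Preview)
Your proposal is correct and follows essentially the same route as the paper: deform the worst-case circuit via Definition~\ref{def:C_theta}, use Corollary~\ref{cor:Polynomial} and the Feynman-path expansion to establish that $\text{p}_0(C(\theta))$ is a low-degree rational function of $\theta$ (the paper makes exactly your observation that the square-root normalizers become honest polynomials upon taking $|\cdot|^2$, arriving at degree $(2mD_4,2mD_4)$), query the oracle at $\mathrm{poly}(n)$ values of $\theta$ near $0$, invoke Algorithm~\ref{alg:(Berlekamp-Welch-for-Rational} to recover the function, and evaluate at $\theta=1$. The one presentational difference is that where the paper appeals to a two-variable averaging (``over $1/2+1/\mathrm{poly}(n)$ of the choices of $\{U_j\}$, it must succeed over $1/2+1/\mathrm{poly}(n)$ of $\theta$'s''), you instead fix the $\theta_i$ deterministically and apply Markov's inequality to the number of erroneous evaluations---this is cleaner and makes the role of the $3/4$ threshold more transparent, but it is the same argument in substance.
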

\begin{proof}
Consider $\text{p}_{0}(C(\theta))$ over the choice of $C(\theta)$ from
the distribution induced by the $\theta-$deformed Haar measure ${\cal H}_{{\cal A},\theta}$
over $G_{j}(\theta)$'s and for $\theta=1/\text{poly}(n)$.

Let $\{G_{j}\}$ be a collection of independent Haar random gates
as defined in Def. \eqref{def:(-deformed-Haar)-Let}. Recall from Def.
\eqref{def:C_theta} that $C(\theta)$ is the scrambled worst-case circuit,
which coincides with the worst case circuit at $\theta=1$. By the
modified QR-Algorithm each of the entries of $C(\theta)$ is a ratio
of a polynomial with the square root of a polynomial (see Eq. \eqref{eq:Entries_ModifiedQR}
in Alg. \eqref{alg:ModifiedQR}).

We now prove that $\text{p}_{0}(\theta)=\left|\langle0^{n}|C(\theta)|0^{n}\rangle\right|^{2}$
is a rational function of degree at most $(2mD_{4},2mD_{4})$, which
is low degree for $m=\text{poly}(n)$. This follows from Corollary \eqref{cor:Polynomial}
and the following arguments. Before normalizing each $\langle y_{j}|C_{j}(\theta)|y_{j-1}\rangle$
is a polynomial of degree at most $D_{4}$. The product of such
functions in Eq. \eqref{eq:Feynman} is simply a polynomial of degree at most $m D_4$ because there are $m$ terms in the product, and  these polynomials are closed under addition.
Therefore, $\langle0^{n}|C(\theta)|0^{n}\rangle$ is a polynomial of degree at
most $mD_{4}$ over the square root of a polynomial of degree at most
$2mD_{4}$. But we are interested in the probability $\text{p}_{0}(\theta)=|\langle0^{n}|C(\theta)|0^{n}\rangle|^2$,
which because of taking the absolute value and squaring becomes exactly a rational function of degree at most $(2mD_{4},2mD_{4})$. Recall that $D_4=27$ is just a constant.

Similar to the argument in \cite{bouland2018quantum}, a simple counting
shows that if there exists an oracle $\mathcal{O}$ that can compute
$\text{p}_{0}(\theta)$ for $3/4+1/\text{poly}(n)$ of $C(\theta)$,
then given that it correctly evaluates $\text{p}_{0}(\theta)$ over
$1/2+1/\text{poly}(n)$ of the choices of $\{U_{j}\}$, it must succeed
over $1/2+1/\text{poly}(n)$ of $\theta$'s as well. This sets the
tolerance to errors for interpolating $\text{p}_{0}(\theta)$ based
on sampling using the BW algorithm for rational functions (Alg. \eqref{alg:(Berlekamp-Welch-for-Rational}). 

We are seeking an interpolation of a $(2mD_{4},2mD_{4})$ rational
function. Since it is expected that $1/2-1/\text{poly}(n)$ of $\theta$'s
may be erroneous, the least expected number of independent evaluation
points ($\theta$'s) to exactly interpolate $F(\theta)$ is $2mD_{4}\text{poly}(n)$,
where the $\text{poly}(n)$ factor is exactly the one that appears
in the fraction of successful $\theta$'s. This exactly interpolates
$\text{p}_{0}(\theta)$ and we can now interpolate $\text{p}_{0}(1)$,
which is the worst case probability $\text{p}_{0}(C)$. Since the
latter is assumed to be a $\#P$-Hard problem, it would be a contradiction
to have an efficient oracle that evaluates $\text{p}_{0}(\theta)$
for generic $\theta$.
\end{proof}
\begin{rem}
Since the rational BW algorithm above is explicit and exact, it might
help characterize $\text{p}_{0}(\theta)=|\langle0^{n}|C(\theta)|0^{n}\rangle|^{2}$
sufficiently well as to help proving Conjecture \eqref{QSupremacyConj}.
\end{rem}
We now prove that Theorem \eqref{thm:MainResult} is necessary for the
quantum supremacy conjecture. This follows from closeness in total
variational distance of ${\cal H}_{{\cal A},\theta}$ to the Haar
measure local circuit ${\cal H}_{{\cal A}}$ that was essentially
in Lemma \eqref{lem:Theta_Haar} and Theorem \eqref{thm:MainResult}.
Namely, we show that any algorithm ${\cal O}$ that works on average over
circuits drawn from the distribution ${\cal H}_{{\cal A},\theta}$
can be used to get an algorithm that works on average
over circuits drawn from ${\cal H}_{{\cal A}}$.
\begin{cor} Exact average case hardness (Theorem \eqref{thm:MainResult}) is necessary for proving the quantum supremacy conjecture (conjecture \eqref{QSupremacyConj}).
\end{cor}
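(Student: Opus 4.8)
The plan is to establish two things: (i) that the hardness in Theorem \eqref{thm:MainResult} is hardness over precisely the Haar random circuit ensemble ${\cal H}_{{\cal A}}$ appearing in Conjecture \eqref{QSupremacyConj}, not over some artificial surrogate; and (ii) that this exact average case hardness over ${\cal H}_{{\cal A}}$ is implied by --- hence a necessary consequence of --- the approximate average case hardness Conjecture \eqref{Conj1}, which itself suffices for the supremacy conjecture (the Remark following Conjecture \eqref{QSupremacyConj}, and \cite{aaronson2011computational}).

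For (i) I would argue by total variation closeness. By Lemma \eqref{lem:Theta_Haar} each $\theta$-deformed gate $G_{j}(\theta)$ of Definition \eqref{def:C_theta} is $O(\theta)$-close in total variation distance (TVD) to a Haar random $2\times2$ or $4\times4$ unitary; by translation invariance of the Haar measure the same bound holds for $C_{j}G_{j}(\theta)$ against a Haar gate, the $\ell_{1}$ distance defining TVD is unchanged by the unitary relabelling of measurement outcomes, and summing over the $m=\text{poly}(n)$ gates gives $\text{TVD}\left({\cal H}_{{\cal A},\theta},{\cal H}_{{\cal A}}\right)=O(m\theta)$. For any fixed classical randomized algorithm ${\cal O}$, the set of circuits on which ${\cal O}$ returns $\text{p}_{0}$ (exactly, or to the prescribed precision) is a well-defined event, and no event can have probabilities differing by more than the TVD between the two ensembles; hence an ${\cal O}$ that succeeds on a $3/4+1/\text{poly}(n)$ fraction under ${\cal H}_{{\cal A},\theta}$ --- the regime in which the interpolation argument of Theorem \eqref{thm:MainResult} operates --- succeeds on at least a $3/4+1/\text{poly}(n)-O(m\theta)$ fraction under ${\cal H}_{{\cal A}}$, which for $\theta=1/\text{poly}(n)$ and a suitable choice of the polynomials is still above $3/4$. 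Thus Theorem \eqref{thm:MainResult} genuinely asserts $\#P$-hardness of $\text{p}_{0}$ on average over ${\cal H}_{{\cal A}}$, in contrast with the non-unitary truncated ensemble of \cite{bouland2018quantum}, whose relation to the Haar ensemble has to be argued separately.

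For (ii) I would close the logical loop: an efficient exact average case evaluator of $\text{p}_{0}$ over ${\cal H}_{{\cal A}}$ is in particular an efficient $1/\text{poly}(n)$-multiplicative-error evaluator of $\text{p}_{0}$ over ${\cal H}_{{\cal A}}$, so Conjecture \eqref{Conj1} (approximate average case $\#P$-hardness over ${\cal H}_{{\cal A}}$) implies the exact average case $\#P$-hardness over ${\cal H}_{{\cal A}}$ provided by Theorem \eqref{thm:MainResult}. Contrapositively, were Theorem \eqref{thm:MainResult} false then Conjecture \eqref{Conj1} would be false, and with it the approximate average case hardness which --- together with the proven anti-concentration bounds \cite{Harrow_Saeed2018,Brandao2013} and a Stockmeyer counting step --- is the only known route from a hypothetical efficient classical RCS sampler to a collapse of the polynomial hierarchy. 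Hence Theorem \eqref{thm:MainResult} is a necessary step on the standard route to Conjecture \eqref{QSupremacyConj}, which is the assertion of the corollary.

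The only place calling for care in (i) is the bookkeeping: one must keep $m\theta=O(1/\text{poly}(n))$ small relative to the $1/\text{poly}(n)$ slack in the $3/4$ threshold, which is precisely why Theorem \eqref{thm:MainResult} is stated with that slack. The genuine obstacle, though, lies not in this corollary but in what it does not deliver: an efficient classical RCS sampler yields, via Stockmeyer, only an \emph{approximate} average case value of $\text{p}_{0}$ inside the polynomial hierarchy, which the \emph{exact} hardness of Theorem \eqref{thm:MainResult} does not contradict. Bridging exact to approximate --- robustness of the average case hardness under inverse-polynomial error, i.e.\ Conjecture \eqref{Conj1} --- is left open, as for every quantum supremacy proposal to date.
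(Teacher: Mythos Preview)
Your approach matches the paper's: invoke the $O(m\theta)$ TVD bound of Lemma \eqref{lem:Theta_Haar} to identify ${\cal H}_{{\cal A},\theta}$ with ${\cal H}_{{\cal A}}$, and then observe that approximate average-case hardness trivially implies exact average-case hardness. You are actually more careful than the paper in one respect: the paper's proof opens with ``Conjecture \eqref{QSupremacyConj} implies Theorem \eqref{thm:MainResult}'', but what the argument establishes (and what you correctly state) is that Conjecture \eqref{Conj1} implies Theorem \eqref{thm:MainResult}, with the link to Conjecture \eqref{QSupremacyConj} going through the standard Stockmeyer/anti-concentration route rather than being a direct logical implication.

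One minor slip in your part (i): you write that success under ${\cal H}_{{\cal A},\theta}$ implies success under ${\cal H}_{{\cal A}}$, but to conclude that the hardness established over ${\cal H}_{{\cal A},\theta}$ (the regime the interpolation actually uses) transfers to ${\cal H}_{{\cal A}}$, you need the reverse direction --- an ${\cal O}$ succeeding on a $3/4+1/\text{poly}(n)$ fraction of ${\cal H}_{{\cal A}}$ must, by the same TVD bound, succeed on a $3/4+1/\text{poly}(n)-O(m\theta)$ fraction of ${\cal H}_{{\cal A},\theta}$, which then feeds the interpolation and yields the contradiction. Since the TVD inequality you invoke is symmetric this is harmless, but the direction as you wrote it does not by itself support the conclusion you draw from it.
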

\begin{proof}
We prove that conjecture  \eqref{QSupremacyConj} implies Theorem \eqref{thm:MainResult}.  Suppose there exists
an algorithm ${\cal O}$ whose input is a circuit $C(0)$ with local
gates being Haar distributed (i.e., ${\cal H}_{{\cal A}}$ distributed)
and outputs $\text{p}_{0}(C(0))$. Then the same algorithm run on
the circuit $C(\theta)$ whose local gates are distributed according
to ${\cal H}_{{\cal A},\theta}$ in Lemma \eqref{lem:Theta_Haar} results
in the output $\text{p}_{0}(C(\theta))$, whose distribution is $O(\theta)$
close to the distribution of $\text{p}_{0}(C(0))$.  This is because the total variation distance is the supremum
over the positive difference of all functions between the two distributions.
Since the algorithm ${\cal O}$ is such an example, on inputs ${\cal H}_{{\cal A}}$
and ${\cal H}_{{\cal A},\theta}$, it results in output distributions
that are at most $O(\theta)$ in total variation distance. Now if there exists an algorithm that exactly computes probabilities on average (with probability $3/4+1/\text{poly}(n)$), then the algorithm approximately computes the probabilities over ${\cal H}_{{\cal A}}$. Hence if the latter is $\#P$-hard then the former must be too.
\end{proof}

\begin{rem}
We proved the exact average case hardness of sampling from the output distribution of RCS. The previous work of Bouland et al proves a small robustness of $2^{-\text{poly}(n)}$ for the truncated polynomial series expansion \cite{bouland2018quantum}. Since the denominators of the rational functions being considered here can be proved to be not too small, we expect similar arguments based on Patrui's lemma \cite{paturi1992degree} would prove exponentially small robustness in our case as well. However, we leave a more thorough characterization of the robustness based on the ideas herein for future work.
\end{rem}

\section{Discussion and conclusions}
Here we constructed a general way of interpolating between unitaries (e.g., quantum gates or circuits), which may find independent use in other contexts such as those involving scrambling. The interpolation is efficient in that certain functions of it can be learned by a polynomial number of samples. This follows from the polynomial and rational function dependence of the entries on the interpolation parameter (Alg. \eqref{alg:ModifiedQR}). We also showed that the unitary matrices that result may be controlled to have a distribution close to Haar measure in total variation distance. 

The mathematical results were then applied to give a fresh new proof of the average case hardness of random circuit sampling. There are other known facts about RCS, such as the equivalence with sampling and anti-concentration results.  We refer the reader to Appendices A5-A7 of \cite{bouland2018quantum} for a summary and other complexity theoretical background on this problem. 

It is important to note that in the definition of RCS, one wants to
allow some errors, partly because the errors are experimentally inevitable.
The complexity of RCS is an open problem. To prove the quantum supremacy conjecture the robustness needs to be improved. However, the quantum supremacy conjecture may be false or unrealistic to realize experimentally. 

\begin{center}
\bf{Acknowledgements}
\end{center}
I thank Adam Bouland, Bill Fefferman, Karthikeyan Shanmugam, Scott Aaronson, Madhu Sudan
for discussions. I acknowledge the support of the Frontiers Foundation
and the MIT-IBM collaborative grant.
\end{singlespace}
\bibliographystyle{plain}
\bibliography{mybib}

\begin{thebibliography}{10}

\bibitem{aaronson2011computational}
Scott Aaronson and Alex Arkhipov.
\newblock The computational complexity of linear optics.
\newblock In {\em Proceedings of the forty-third annual ACM symposium on Theory
  of computing}, pages 333--342. ACM, 2011.

\bibitem{aaronson2016complexity}
Scott Aaronson and Lijie Chen.
\newblock Complexity-theoretic foundations of quantum supremacy experiments.
\newblock {\em arXiv preprint arXiv:1612.05903}, 2016.

\bibitem{boixo2018characterizing}
Sergio Boixo, Sergei~V Isakov, Vadim~N Smelyanskiy, Ryan Babbush, Nan Ding,
  Zhang Jiang, Michael~J Bremner, John~M Martinis, and Hartmut Neven.
\newblock Characterizing quantum supremacy in near-term devices.
\newblock {\em Nature Physics}, 14(6):595, 2018.

\bibitem{bouland2018quantum}
Adam Bouland, Bill Fefferman, Chinmay Nirkhe, and Umesh Vazirani.
\newblock Quantum supremacy and the complexity of random circuit sampling.
\newblock {\em arXiv preprint arXiv:1803.04402}, 2018.

\bibitem{Brandao2013}
Fernando~G.S.L. Brandao and Michal Horodecki.
\newblock Quantum speedups are generic.
\newblock {\em Quantum Information and Computation 13, 0901 (2013)},
  13:901--924, 2013.

\bibitem{bremner2011classical}
Michael~J Bremner, Richard Jozsa, and Dan~J Shepherd.
\newblock Classical simulation of commuting quantum computations implies
  collapse of the polynomial hierarchy.
\newblock In {\em Proceedings of the Royal Society of London A: Mathematical,
  Physical and Engineering Sciences}, volume 467, pages 459--472. The Royal
  Society, 2011.

\bibitem{dalzell2018many}
Alexander~M Dalzell, Aram~W Harrow, Dax~Enshan Koh, and Rolando~L La~Placa.
\newblock How many qubits are needed for quantum computational supremacy?
\newblock {\em arXiv preprint arXiv:1805.05224}, 2018.

\bibitem{Harrow_Saeed2018}
Aram Harrow and Saeed Mehraban.
\newblock Approximate unitary t-designs by short random quantum circuits using
  nearest-neighbor and long-range gates.
\newblock {\em arXiv preprint arXiv:1809.06957}, 2018.

\bibitem{harrow2017quantum}
Aram~W Harrow and Ashley Montanaro.
\newblock Quantum computational supremacy.
\newblock {\em Nature}, 549(7671):203, 2017.

\bibitem{hayden2007black}
Patrick Hayden and John Preskill.
\newblock Black holes as mirrors: quantum information in random subsystems.
\newblock {\em Journal of High Energy Physics}, 2007(09):120, 2007.

\bibitem{lipton1989new}
Richard~J Lipton.
\newblock New directions in testing.
\newblock {\em Distributed Computing and Cryptography}, 2:191--202, 1989.

\bibitem{EMeckes_IAS2014}
Elizabeth Meckes.
\newblock Concentration of measure and the compact classical matrix groups.
\newblock \url{https://www.math.ias.edu/files/wam/Haar_notes-revised.pdf},
  2014.

\bibitem{mezzadri2006generate}
Francesco Mezzadri.
\newblock How to generate random matrices from the classical compact groups.
\newblock {\em arXiv preprint math-ph/0609050}, 2006.

\bibitem{paturi1992degree}
Ramamohan Paturi.
\newblock On the degree of polynomials that approximate symmetric boolean
  functions (preliminary version).
\newblock In {\em Proceedings of the twenty-fourth annual ACM symposium on
  Theory of computing}, pages 468--474. ACM, 1992.

\bibitem{preskill2018quantum}
John Preskill.
\newblock Quantum computing in the nisq era and beyond.
\newblock {\em arXiv preprint arXiv:1801.00862}, 2018.

\bibitem{reed1960polynomial}
Irving~S Reed and Gustave Solomon.
\newblock Polynomial codes over certain finite fields.
\newblock {\em Journal of the society for industrial and applied mathematics},
  8(2):300--304, 1960.

\bibitem{strang1993introduction}
Gilbert Strang.
\newblock {\em Introduction to linear algebra}, volume~3.
\newblock Wellesley-Cambridge Press, MA, 1993.

\bibitem{SudanLect}
Madhu Sudan.
\newblock Lecture notes in essential coding theory.
\newblock \url{http://people.csail.mit.edu/madhu/FT02/}, February 2017.

\bibitem{takayanagi2018holographic}
Tadashi Takayanagi.
\newblock Holographic spacetimes as quantum circuits of path-integrations.
\newblock {\em arXiv preprint arXiv:1808.09072}, 2018.

\bibitem{trefethen1997numerical}
Lloyd~N Trefethen and David Bau~III.
\newblock {\em Numerical linear algebra}, volume~50.
\newblock Siam, 1997.

\bibitem{welch1986error}
Lloyd~R Welch and Elwyn~R Berlekamp.
\newblock Error correction for algebraic block codes.
\newblock December~30 1986.
\newblock US Patent 4 633 470.

\end{thebibliography}

\end{document}